\documentclass[a4paper,12pt]{article}
\usepackage[usenames,dvipsnames]{xcolor}
\usepackage[utf8]{inputenc}
\usepackage[english]{babel}
\usepackage{amsmath}
\usepackage{amssymb}
\usepackage{amsthm}
\usepackage{mathrsfs}
\usepackage{dsfont}
\usepackage{soul}
\usepackage[mathcal]{euscript}
\usepackage{stmaryrd}
\usepackage{multirow}
\usepackage{authblk}
\usepackage[top=3cm, bottom=3cm, left=3cm, right=3cm]{geometry}

\usepackage{tikz}
\usetikzlibrary{arrows,automata}

\usepackage{pbox}
\usepackage[colorlinks=true,urlcolor=blue,linkcolor=blue]{hyperref}
\usepackage[round]{natbib}

\usepackage{caption}

\SetSymbolFont{stmry}{bold}{U}{stmry}{m}{n}
\def\mathscr{\EuScript}

\newcommand{\sect}[1]{Sect.~\ref{#1}}

\newcommand{\aggreg}{A}
\newcommand{\factor}{F}
\newcommand{\composite}{S}
\newcommand{\subaggreg}{\composite^{\aggreg,\factor}}

\newcommand{\HeadSet}{\mathbb{H}}
\newcommand{\TailSet}{\mathbb{T}}
\newcommand{\headelement}{h}
\newcommand{\tailelement}{t}

\newcommand{\valueset}{\mathbb{A}}
\newcommand{\secondvalueset}{\FF}
\newcommand{\valueelement}{a}
\newcommand{\secondvalueelement}{f}

\newcommand{\cA}{\mathcal{A}}

\newcommand{\cP}{\mathcal{P}}

\newcommand{\cX}{\mathcal{X}}

\newcommand{\BB}{\mathbb{B}}

\newcommand{\FF}{\mathbb{F}}

\newcommand{\NN}{\mathbb{N}}

\newcommand{\RR}{\mathbb{R}}

\newcommand{\XX}{\mathbb{X}}
\newcommand{\YY}{\mathbb{Y}}
\newcommand{\ZZ}{\mathbb{Z}}


\newcommand{\np}[1]{(#1)}                                   
\newcommand{\bp}[1]{\big(#1\big)}                           
\newcommand{\Bp}[1]{\Big(#1\Big)}                           

\newcommand{\nc}[1]{[#1]}                                   
\newcommand{\bc}[1]{\big[#1\big]}                           
\newcommand{\Bc}[1]{\Big[#1\Big]}                           

\newcommand{\na}[1]{\{#1\}}                                 
\newcommand{\ba}[1]{\big\{#1\big\}}                         
\newcommand{\Ba}[1]{\Big\{#1\Big\}}                         
\newcommand{\bga}[1]{\bigg\{#1\bigg\}}                      
\newcommand{\Bga}[1]{\Bigg\{#1\Bigg\}}                      



\newcommand{\tribu}[1]{\mathscr{#1}}                        
\newcommand{\omeg}{\Omega}                                  
\newcommand{\prbt}{\mathbb{P}}                              
\newcommand{\espe}{\mathbb{E}}                              
\newcommand{\mesr}{\mathbf{\rho}}                              
\newcommand{\mesu}{\mathbf{\rho}}                              


\makeatletter
\def\va@a{\boldsymbol{\va@arg^{\textstyle\text{\unboldmath$\scriptstyle\va@expo$}}_{\textstyle\text{\unboldmath$\scriptstyle\va@index$}}}}
\def\va#1{\def\va@expo{}\def\va@index{}\def\va@arg{\uppercase{#1}}%
  \@ifnextchar^{\va@h}{\@ifnextchar_\va@u\va@a}}
\def\va@h^#1{\def\va@expo{#1}\@ifnextchar_\va@hu\va@a}
\def\va@u_#1{\def\va@index{#1}\@ifnextchar^\va@uh\va@a}
\def\va@hu_#1{\def\va@index{#1}\va@a}
\def\va@uh^#1{\def\va@expo{#1}\va@a}
\makeatother



\newcommand{\normdelim}[1]{\nc{#1}}                         
\newcommand{\bigdelim}[1]{\bc{#1}}                          
\newcommand{\vardelim}[1]{\left(#1\right)}                  

\newcommand{\normdelims}[2]{\normdelim{#1\mid#2}}           
\newcommand{\bigdelims}[2]{\bigdelim{#1\ \big|\ #2}}        %
        %
     %
     %
             %


\newcommand{\nesp}[2]{\espe_{#1}\normdelim{#2}}           
\newcommand{\besp}[2]{\espe_{#1}\bigdelim{#2}}            




\newcommand{\nespc}[3]{\espe_{#1}\normdelims{#2}{#3}}     
\newcommand{\bespc}[3]{\espe_{#1}\bigdelims{#2}{#3}}      


\def\eqsepv{\; , \enspace}                                  
\def\eqfinv{\; ,}                                           
\def\eqfinp{\; .}                                           

\newcommand{\finpreuvesymb}{$\Box$}
\newcommand{\finremarksymb}{$\Diamond$}
\newcommand{\finexemplesymb}{$\triangle$}
\newcommand{\finpreuve}{\ \hspace*{\fill}\finpreuvesymb}
\newcommand{\finremark}{\ \hspace*{\fill}\finremarksymb}
\newcommand{\finexemple}{\ \hspace*{\fill}\finexemplesymb}

\newcommand{\image}[1]{\textrm{Im}\np{#1}}

\newcommand{\PRIMAL}{{\mathbb C}}
\newcommand{\primal}{c}
\newcommand{\DUAL}{{\PRIMAL}^{\sharp}}
\newcommand{\dual}{{\primal}^{\sharp}}
\newcommand{\barRR}{[-\infty,+\infty]} 
\newcommand{\fonctionprimal}{f} 
\newcommand{\coupling}{\Phi}
\newcommand{\SFM}[2]{#1^{#2}}
\newcommand{\LowPlus}{\plusdot}
\newcommand{\UppPlus}{\dotplus}  

\def\stackops#1#2#3{%
  \mathrel{\vbox{\offinterlineskip\ialign{%
        \hfil##\hfil\cr
        $#1$\cr
        \noalign{\kern#3}
        $#2$\cr}}}}

\def\plusdot{\stackops{\cdot}{+}{-2.5ex}}


\makeatletter
\def\endproof{\finpreuve\@endtheorem}
\def\endremark{\finremark\@endtheorem}
\def\endexample{\finexemple\@endtheorem}
\makeatother

\title{Equivalence Between Time Consistency\\and Nested Formula}
\author[ ]{Henri Gérard\textsubscript{a,b}\thanks{hgerard.pro@gmail.com}}
\author[ ]{Michel De Lara\textsubscript{a}\thanks{michel.delara@enpc.fr}}
\author[ ]{Jean-Philippe Chancelier\textsubscript{a}\thanks{jean-philippe.chancelier@enpc.fr}}
\affil[ ]{\textsubscript{a}\small \it Université Paris-Est, CERMICS (ENPC), F-77455 Marne-la-Vallée, France}
\affil[ ]{\textsubscript{b}\small \it Université Paris-Est, Labex Bézout, F-77455 Marne-la-Vallée, France}

\setcounter{tocdepth}{2}

\begin{document}
\maketitle
\begin{abstract}
Figure out a situation where, 
at the beginning of every week, one has to rank every pair of 
stochastic processes starting from that week up to the horizon.
Suppose that two processes are equal at the beginning of the week.
The ranking procedure is time consistent if the ranking
does not change between this week and the next one.
In this paper, we propose a minimalist definition of Time Consistency (TC)
between two (assessment) mappings.
With very few assumptions, we are able to prove an 
equivalence between Time Consistency and a Nested Formula (NF)
between the two mappings.
Thus, in a sense, two assessments are consistent if and only if one 
is factored into the other. 
We review the literature and observe that the various 
definitions of TC (or of NF) are special cases of ours, as they always include 
additional assumptions. 
By stripping off these additional assumptions, we present an overview of the
literature where the specific contributions of authors are enlightened.
Moreover, we present two classes of mappings, 
translation invariant mappings and Fenchel-Moreau conjugates, 
that display time consistency under suitable assumptions. 
\end{abstract}

{\bf Keywords:} Dynamic Risk Measure, Time Consistency, Nested Formula

\theoremstyle{definition}
\newtheorem{mydef}{Definition}[section]
\newtheorem{ex}[mydef]{Example}

\theoremstyle{plain}
\newtheorem{theo}[mydef]{Theorem}
\newtheorem{lem}[mydef]{Lemma}
\newtheorem{cor}[mydef]{Corollary}
\newtheorem{prop}[mydef]{Proposition}

\theoremstyle{remark}
\newtheorem{rem}[mydef]{Remark}
\newtheorem{ax}[mydef]{Axiom}
\newtheorem{assu}[mydef]{Assumption}

\newpage
\section{Introduction}
\label{Introduction}

Behind the words ``Time Consistency'' and ``Nested Formula'', 
one can find a vast literature resorting to economics, 
dynamical risk measures and stochastic optimization.

Let us start with economics.
In a dynamic bargaining problem, a group of agents has to agree on a common path
of actions. As time goes on and information is progressively revealed, 
they can all reconsider the past agreement, 
and possibly make new assessments leading to new actions. 
Stability is the property that the agents will stick to their previous
commitment.
Time consistency is a form of stability when an individual makes a deal between 
his different selves (agents) along time. 
The notion of ``consistent course of action''~\citep[see][]{Peleg-Yaari:1973}
is well-known in the field of economics, with the seminal work
of~\cite*{Strotz_RoES_1956}: an individual having planned
his consumption trajectory is consistent if, reevaluating his plans later on,
he does not deviate from the originally chosen plan. This idea of consistency
as ``sticking to one's plan'' may be extended to the uncertain case where
plans are replaced by decision rules (``Do thus-and-thus if you find
yourself in this portion of state space with this amount of
time left'', Richard Bellman cited in~\cite*{Dreyfus:2002});
\cite*{Hammond_RoES_1976} addresses ``consistency'' and
``coherent dynamic choice'', \cite*{Kreps-Porteus:1978}
refers to ``temporal consistency''. 
Another classical reference in economics is \cite*{Epstein-Schneider:2003}.

Dynamic or Time Consistency has been introduced in the context 
of dynamical risk measures \citep[see][for definitions
and properties of coherent and consistent dynamic risk
measures]{Riedel:2004,Detlefsen_FS_2005,Cheridito_EJP_2006,ADEH-coherent:2007}.

In the field of stochastic optimization, 
Time Consistency has then been studied 
for Markov Decision Processes by \cite*{ruszczynski2010risk}. 
\bigskip

These different origins of Time Consistency contribute to a disparate
literature.
First, as Nested Formulas lead naturally to Time Consistency,
some authors study the conditions to obtain Nested Formulas,
whereas others focus on the axiomatics of Time Consistency 
and obtain Nested Formulas.
Second, many definitions cohabit. For instance, 
\cite*{ruszczynski2010risk} add 
translation invariant property with additive criterion, 
\cite*{Shapiro:2016,ADEH-coherent:2007} 
add assumptions of coherent risk measures, 
and many authors focus on a particular structure of information (filtration).
In this disconnected landscape,  \cite*{DeLara-Leclere:2016}
tries to make the connection 
between ``dynamic consistency'' for optimal control problems 
(economics, stochastic optimization)
and ``time consistency'' for dynamic risk measures.
In this paper, we will focus on Time Consistency, motivated by 
dynamic risk measures --- where the future 
assessment of a tail of a process is consistent with the initial assessment 
of the whole process, head and tail --- but not limited to them.
Below, we sketch our definitions of TC and NF.
Our main contribution will be proving their equivalence.
Let $\HeadSet$ and $\TailSet$ be two sets, respectively called
\emph{head set} and \emph{tail set}.
Let $\valueset$, $\secondvalueset$ be two sets
and let $\aggreg$ and $\factor$ be two mappings as follows:
\begin{equation}
    \aggreg: \HeadSet \times \TailSet \to \valueset 
    \eqsepv
    \factor:\TailSet \to \secondvalueset     
    \eqfinp
\end{equation}
The mapping~$\aggreg$ is called an \emph{aggregator},
as it aggregates head-tail in $\HeadSet \times \TailSet$ into an element of~$\valueset$. 
The mapping~$\factor$ is called a \emph{factor} because of the 
Nested Formula~(NF).

\paragraph{Axiomatic for Time Consistency.}
We start presenting axiomatic of Time Consistency in a nutshell.
Depending on the authors, the objects that are manipulated are either processes 
\citep*{Riedel:2004,Detlefsen_FS_2005,Cheridito_EJP_2006,ADEH-coherent:2007}
or lotteries \citep*{Kreps-Porteus:1978,Epstein-Schneider:2003}.
These objects are divided into two parts: a head~$\headelement$ and 
a tail~$\tailelement$. 
On the one hand, we have a way to assess any tail~$\tailelement$
by means of a mapping~$\factor$ (factor), 
yielding~$\factor\np{\tailelement}$.  
On the other hand, we have a way to assess any couple
head-tail~$\np{\headelement, \tailelement}$ by means of a 
mapping~$\aggreg$ (aggregator), 
yielding~$\aggreg\np{\headelement, \tailelement}$.

We look for a consistency property between these two ranking mappings~$\factor$ 
and~$\aggreg$: if a tail~$\tailelement$ is equivalent to
a tail~$\tailelement'$, then the two elements 
$\np{\headelement,     \tailelement}$ 
and~$\np{\headelement, \tailelement'}$ --- that share the same head --- 
must be such that $\np{\headelement, \tailelement}$ is 
equivalent to 
$\np{\headelement, \tailelement'}$. 
This can be written mathematically as
\begin{equation*}
    \factor\np{\tailelement} = \factor\np{\tailelement'}
    \Rightarrow
\aggreg\np{\headelement,\tailelement} = \aggreg\np{\headelement,\tailelement'}
    \eqsepv
    \forall \np{\headelement,\tailelement,\tailelement'} 
    \in \HeadSet \times \TailSet^{2}
    \eqfinp \tag*{(TC)}
\end{equation*}

\paragraph{Axiomatic for Nested Formulas.}
Some authors focus on sufficient conditions to obtain a Nested Formula
\citep*{Shapiro:2016,Ruszczynski-Shapiro:2006}.
In a Nested Formula, the assessment~$\factor\np{\tailelement}$
of any tail~$\tailelement$ is factored
inside the assessment~$\aggreg\np{\headelement, \tailelement}$ of any 
head-tail~$\np{\headelement, \tailelement}$ by means 
of a surrogate mapping~$\subaggreg$ as follows:
\begin{equation*}
    \aggreg\np{\headelement, \tailelement}
    =
    \subaggreg\bp{\headelement, \factor\np{\tailelement}}
    \eqfinp \tag*{(NF)}
\end{equation*}

Of course, (NF) implies (TC).
We will prove the reverse: (TC) implies that there 
exists a mapping $\subaggreg$ such that (NF) holds true.
\bigskip

In Sect.~\ref{section_literature_review}, 
we go through the literature, with the goal of extracting 
the following components: what kind of objects are treated, 
what are the heads and the tails, how these objects are ranked.
In Sect.~\ref{section_results}, we formally state our definitions
of Time Consistency (TC) and Nested Formula (NF), 
and we prove their equivalence. 
We also provide conditions to obtain analytical properties of the mapping $\subaggreg$ appearing
in the Nested Formula, such as monotonicity, 
continuity, convexity, positive homogeneity and translation invariance.
In Sect.~\ref{section_application}, we show that our framework 
covers the different frameworks reviewed 
in Sect.~\ref{section_literature_review}.
Finally, in Sect.~\ref{Two_classes_of_time_consistent_mappings},
we present two classes of mappings, 
translation invariant mappings and Fenchel-Moreau conjugates, 
that display time consistency under suitable assumptions.

\section{Review of the literature}
\label{section_literature_review}

We have screened a selection of papers, in mathematics and economics, touching
Time Consistency and Nested Formula in various settings. 
Depending on the setting, we identify the following components, as introduced 
in Sect.~\ref{Introduction}: what kind of objects are treated, 
what are the heads and the tails, how are these objects ranked.
Table~\ref{table_TC_NF} sums up our survey. 

\begin{table}[h]
   \centering
   \begin{tabular}{c|c|c|c|c|c|}
       \cline{2-6}
        
        & Article
        & Objects 
        & Head 
        & Tail 
        & Assessment
        \\
        \hline
        \hline
        \multirow{4}*{\rotatebox[origin=c]{90}{$\overbrace{\qquad \qquad \qquad \quad}^{\text{\small{Time Consistency}}}$}}
        & \pbox{2.5cm}{\cite*{Kreps-Porteus:1978}}
        & Lottery
        & \pbox{4cm}{Lottery \\ from $1$ to $s$}
        & \pbox{4cm}{Lottery from\\$s+1$ to $T$}
        & \small{Expected utility}
        \\
        \cline{2-6}
        
        & \pbox{2.5cm}{\cite*{Epstein-Schneider:2003}}
        & Lottery
        & \pbox{4cm}{Lottery \\ from $1$ to $s$}
        & \pbox{4cm}{Lottery from\\$s+1$ to $T$}
        & \pbox{4cm}{\small{Not necessarily}\\\small{expected utility}}
        \\
        \cline{2-6}
        
        & \pbox{2.5cm}{\cite*{ruszczynski2010risk}}
        & Process
        & \pbox{4cm}{Process\\from $1$ to $s$}
        & \pbox{4cm}{Process from\\$s+1$ to $T$}
        & \pbox{4cm}{Dynamic\\risk measure}
        \\
        \cline{2-6}
        
        &\pbox{2.5cm}{\cite*{ADEH-coherent:2007}}
        & Process
        & \pbox{4cm}{Process\\from $1$ to $\tau$}
        & \pbox{4cm}{\small{Process from}\\\small{$\tau$ to $T$,}\\\footnotesize{$\tau$ stopping time}}
        & \pbox{4cm}{Coherent\\risk measure}
        \\
        \hline
        \hline
        \multirow{3}*{\rotatebox[origin=c]{90}{$\overbrace{\qquad \qquad \quad}^{\overset{\text{\small Nested}}{\text{\small Formula}}}$}}
        & \pbox{2.5cm}{\cite*{Shapiro:2016}}
        & Process
        & \pbox{4cm}{Process\\from $1$ to $s$}
        & \pbox{4cm}{Process from\\ $s+1$ to $T$}
        & \pbox{4cm}{Coherent\\risk measure}
        \\
        \cline{2-6}
        
        & \pbox{2.5cm}{\cite*{Ruszczynski-Shapiro:2006}}
        & Process
        & \pbox{4cm}{Process\\from $1$ to $s$}
        & \pbox{4cm}{Process from\\ $s+1$ to $T$}
        & \pbox{4cm}{Coherent\\risk measure}
        \\
        \cline{2-6}
        
        & \pbox{2.5cm}{\cite*{DeLara-Leclere:2016}}
        & Process
        & \pbox{4cm}{Process\\from $1$ to $s$}
        & \pbox{4cm}{Process from\\ $s+1$ to $T$}
        & \pbox{4cm}{Dynamic\\risk measure}
        \\
        \cline{2-6}
   \end{tabular}
\caption{Sketch of papers selected on Time Consistency and 
Nested Formulas 
\label{table_TC_NF}}
\end{table}

\subsection{Axiomatic for Time Consistency (TC)}

The first group of authors is subdivided between economists, 
who deal with lotteries and preferences, 
and probabilists who deal with stochastic processes
and dynamical risk measures. 
 
\subsubsection{Lotteries and preferences}

In \cite*{Kreps-Porteus:1978}, \cite*{Kreps-Porteus:1979} 
and \cite*{Epstein-Schneider:2003}, the authors deal with 
lotteries and preferences. 
A preference is a total, transitive and reflexive relation.
Proper assumptions make it possible that the preference relation
can be represented by a numerical evaluation. 
Assumptions of monotonicity and convexity are also made.

In~\cite*{Kreps-Porteus:1978}, the authors
propose axioms that make that 
the preference is represented by an expected utility formula.

By contrast, more general numerical representations are studied 
in~\cite*{Epstein-Schneider:2003}, even if the authors
add an hypothesis of additive criterion.
A summary of the assumptions can be found in Table~\ref{table_assumptions}.

\subsubsection{Dynamic risk measures and processes}

In~\cite*{ruszczynski2010risk} and \cite*{ADEH-coherent:2007}, 
the authors deal with 
stochastic processes assessed by dynamical risk measures.

In~\cite*{ruszczynski2010risk}, the author
studies a family of conditional risk measures which are monotonic, 
invariant by translation and homogeneous. The criterion is additive.

In~\cite*{ADEH-coherent:2007}, 
the authors
focus on the value of the stochastic process at the final time step. 
They use as assessment a particular class of risk measures,
the so-called coherent risk measures.

\subsection{Axiomatic for Nested Formulas (NF)}

In~\cite*{Shapiro:2016}, \cite*{Ruszczynski-Shapiro:2006} and 
\cite*{DeLara-Leclere:2016}, the focus is on exhibiting
sufficient conditions to obtain Nested Formulas.
All authors study stochastic processes, with an assumption of monotonicity
for the assessment, but there are some differences. 

In~\cite*{Ruszczynski-Shapiro:2006}, the authors
study coherent risk measures in their dual form (hence with properties
of convexity, invariance by translation and additive criterion).

In~\cite*{Shapiro:2016}, the author
focuses on assessing the value of the process at the final step
with coherent risk measures.

In~\cite*{DeLara-Leclere:2016}, the author
study how commutation properties between 
time aggregators and uncertainty aggregators make it possible to obtain 
Nested Formulas.

\begin{table}[h]
   \centering
   \begin{tabular}{c|c|c|c|c|c|c|c|c|}
       \cline{2-5}
        
        & Article
        & Monotonicity
        & \pbox{3cm}{Translation\\invariance} 
        & \pbox{3cm}{Convexity}
        \\
        \hline
        \hline
        \multirow{4}*{\rotatebox[origin=c]{90}{$\overbrace{\qquad \qquad \qquad \quad}^{\overset{\text{\small Time}}{\text{\small Consistency}}}$}}
        &\cite*{Kreps-Porteus:1978}
        & Yes
        & No
        & Yes
        \\
        \cline{2-5}
        
        &\cite*{Kreps-Porteus:1979} 
        & Yes
        & No
        & Yes
        \\
        \cline{2-5}
        
        &\cite*{Epstein-Schneider:2003}
        & Yes
        & No
        & Yes
        \\
        \cline{2-5}
        
        &\cite*{ruszczynski2010risk}
        & Yes
        & Yes
        & No
        \\
        \cline{2-5}
       
        &\pbox{5cm}{\cite*{ADEH-coherent:2007}}
        & Yes
        & Yes
        & Yes
        \\
        \hline
        \hline
        \multirow{3}*{\rotatebox[origin=c]{90}{$\overbrace{\qquad \quad}^{\overset{\text{\small Nested}}{\text{\small Formula}}}$}}
        &\cite*{Shapiro:2016}
        & Yes
        & Yes
        & Yes
        \\
        \cline{2-5}
        
        &\cite*{Ruszczynski-Shapiro:2006}
        & Yes
        & Yes
        & Yes
        \\
        \cline{2-5}
        
        &\cite*{DeLara-Leclere:2016}
        & Yes
        & No
        & No
        \\
        \cline{2-5}
   \end{tabular}
   \caption{Most common assumptions in the selection of papers on Time Consistency and Nested Formula \label{table_assumptions}}
\end{table}

\section{Main result: equivalence between time consistency and nested formula}
\label{section_results}

In Sect.~\ref{Introduction}, we have sketched the notions of 
Time Consistency and Nested Formula.
Now, in~\S\ref{subsection_WTC}, we properly define Weak Time Consistency --- 
with minimal assumptions --- and we prove 
that it is equivalent to a Nested Formula.
In~\S\ref{subsection_USTC}, we extend definitions and results to 
Usual and Strong Time Consistency: by adding order structures, we obtain additional properties.
In~\S\ref{subsection_properties}, we provide conditions to obtain analytical properties of the mapping appearing
in the Nested Formula, such as monotonicity, 
continuity, convexity, positive homogeneity and translation invariance.
Let us introduce basic notations. 
\bigskip

Let $\HeadSet$ and $\TailSet$ be two sets, respectively called
\emph{head set} and \emph{tail set}.
Let $\valueset$, $\secondvalueset$ be two sets
and let $\aggreg$ and $\factor$ be two mappings as follows:
\begin{equation}
    \aggreg: \HeadSet \times \TailSet \to \valueset \eqsepv
    \factor:\TailSet \to \secondvalueset     \eqfinp
    \label{eq:factor_aggregator}
\end{equation}
The mapping~$\aggreg$ is called an \emph{aggregator},
as it aggregates head-tail in $\HeadSet \times \TailSet$ into an element of~$\valueset$. 
The mapping~$\factor$ is called a \emph{factor} because of the 
Nested Formula~(NF) in Sect.~\ref{Introduction}. 

\begin{mydef}
With the couple aggregator-factor
$\np{\aggreg, \factor}$ in~\eqref{eq:factor_aggregator}
we associate the set-valued mapping
\begin{equation}
\label{definition_subaggregator}
    \begin{array}{rl}
        \subaggreg:\ 
        \HeadSet \times \image{\factor} 
        \rightrightarrows& 
        \valueset
        \\
        \np{\headelement, \secondvalueelement} 
        \mapsto
        & 
        \subaggreg\np{\headelement,\secondvalueelement} = \ba{\aggreg\np{\headelement,\tailelement} \mid
        \tailelement \in \factor^{-1}\np{\secondvalueelement}} \eqfinv
    \end{array}
\end{equation}
where $\image{\factor}=\factor\np{\TailSet}$.
We call $\subaggreg$ the \emph{subaggregator} 
of the couple $\np{\aggreg, \factor}$.
\end{mydef}

\subsection{Weak Time Consistency}
\label{subsection_WTC}

\begin{mydef}[Weak Time Consistency]\label{def_WTC}
The couple aggregator-factor
$\np{\aggreg, \factor}$ in~\eqref{eq:factor_aggregator}
is said to satisfy \emph{Weak Time Consistency (WTC)} if we have
    \begin{equation}
\label{weak_time_consistency_representation}
            \factor\np{\tailelement} = \factor\np{\tailelement'} 
            \Rightarrow 
            \aggreg\np{\headelement,\tailelement} = \aggreg\np{\headelement,\tailelement'}
            \eqsepv
            \forall \headelement \in \HeadSet
            \eqsepv \forall \np{\tailelement,\tailelement'} \in \TailSet^{2}
            \eqfinp
    \end{equation}
\end{mydef}

Here is our main result where we characterize the WTC property 
in terms of the subaggregator in~\eqref{definition_subaggregator}.

\begin{theo}[Nested decomposition of WTC mappings]
\label{WTC_theorem}
The couple  aggregator-factor
$\np{\aggreg, \factor}$ in~\eqref{eq:factor_aggregator}
is WTC if and only if the subaggregator set valued mapping~$\subaggreg$ 
in~\eqref{definition_subaggregator} is a mapping.
    In that case, the following \emph{Nested Formula} between mappings
holds true:
    \begin{equation}
\label{equation_proof_weak_nested_decomposition}
        \aggreg\np{\headelement,\tailelement}    = 
\subaggreg\bp{\headelement,\factor\np{\tailelement}}
        \eqsepv \forall \headelement \in \HeadSet
        \eqsepv \forall \tailelement \in \TailSet
    \eqfinp
    \end{equation}
\end{theo}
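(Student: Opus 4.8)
The plan is to establish the equivalence directly from the definition of the subaggregator in~\eqref{definition_subaggregator}, and then read off the Nested Formula~\eqref{equation_proof_weak_nested_decomposition} as an immediate by-product. Throughout, I would keep in mind that, by construction, $\subaggreg\np{\headelement,\secondvalueelement}$ is a subset of~$\valueset$, and that the statement ``$\subaggreg$ is a mapping'' means precisely that each such subset is a singleton. The one point deserving a word is non-emptiness: since the second argument~$\secondvalueelement$ ranges over $\image{\factor}=\factor\np{\TailSet}$, the fiber $\factor^{-1}\np{\secondvalueelement}$ is always nonempty, so each set $\subaggreg\np{\headelement,\secondvalueelement}$ is nonempty; hence ``being a mapping'' reduces to the singleton property alone.

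First I would prove that WTC implies that $\subaggreg$ is single-valued. I would fix $\np{\headelement,\secondvalueelement} \in \HeadSet \times \image{\factor}$ and take any two elements of $\subaggreg\np{\headelement,\secondvalueelement}$; by definition they are of the form $\aggreg\np{\headelement,\tailelement}$ and $\aggreg\np{\headelement,\tailelement'}$ with $\tailelement,\tailelement' \in \factor^{-1}\np{\secondvalueelement}$, so that $\factor\np{\tailelement}=\secondvalueelement=\factor\np{\tailelement'}$. Applying the WTC implication~\eqref{weak_time_consistency_representation} yields $\aggreg\np{\headelement,\tailelement}=\aggreg\np{\headelement,\tailelement'}$, so the two chosen elements coincide and $\subaggreg\np{\headelement,\secondvalueelement}$ is a singleton.

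Conversely, I would show that if $\subaggreg$ is a mapping then WTC holds. Given $\headelement \in \HeadSet$ and $\tailelement,\tailelement' \in \TailSet$ with $\factor\np{\tailelement}=\factor\np{\tailelement'} =: \secondvalueelement$, both $\aggreg\np{\headelement,\tailelement}$ and $\aggreg\np{\headelement,\tailelement'}$ lie in the single set $\subaggreg\np{\headelement,\secondvalueelement}$, since $\tailelement$ and $\tailelement'$ both belong to the fiber $\factor^{-1}\np{\secondvalueelement}$. As that set is a singleton, the two values are equal, which is exactly~\eqref{weak_time_consistency_representation}.

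Finally, the Nested Formula falls out of the first direction. Assuming WTC (equivalently, $\subaggreg$ single-valued), I would fix any $\np{\headelement,\tailelement} \in \HeadSet \times \TailSet$ and set $\secondvalueelement=\factor\np{\tailelement}$. Then $\tailelement \in \factor^{-1}\np{\secondvalueelement}$, so $\aggreg\np{\headelement,\tailelement}$ belongs to the singleton $\subaggreg\np{\headelement,\secondvalueelement}=\subaggreg\bp{\headelement,\factor\np{\tailelement}}$, and therefore equals it, giving~\eqref{equation_proof_weak_nested_decomposition}. I do not expect a genuine obstacle here: the entire content is organizing the set-valued-versus-single-valued bookkeeping correctly, the only subtlety being the non-emptiness of the fibers, which is secured by restricting the second argument of~$\subaggreg$ to~$\image{\factor}$.
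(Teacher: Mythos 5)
Your proof is correct and follows essentially the same route as the paper: the key observation $\aggreg\np{\headelement,\tailelement} \in \subaggreg\bp{\headelement,\factor\np{\tailelement}}$, the singleton argument for the forward direction, and the converse via the fiber membership are all present in the paper's own proof. Your explicit remark on non-emptiness of the fibers (ensured by restricting to $\image{\factor}$) is a small, welcome clarification, but it does not change the argument.
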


\begin{proof}
    Note that we always have by Equation~\eqref{definition_subaggregator} that 
    \begin{equation}\label{eq_inclusion_AinS}
        \aggreg\np{\headelement,\tailelement}
        \in
        \subaggreg\bp{\headelement, \factor\np{\tailelement}}
        \eqfinp
    \end{equation}
    \begin{enumerate}
         \item 
We suppose that the couple $\np{\aggreg, \factor}$ is Weak Time Consistent. 
Consider $\np{\headelement, \secondvalueelement}$ fixed in $\HeadSet \times 
         \image{\factor}$. 
We are going to show that the set valued mapping $\subaggreg$ 
is in fact a mapping, by proving that the set $\subaggreg\np{\headelement, 
\secondvalueelement}$, defined in~\eqref{definition_subaggregator}, 
is reduced to a singleton.
We consider two elements 
$\valueelement = \aggreg\np{\headelement, \tailelement}$ 
and $\valueelement' = \aggreg\np{\headelement, \tailelement'}$ 
in the set~$\subaggreg\np{\headelement, \secondvalueelement}$. 
By definition~\eqref{definition_subaggregator}, we have 
$\factor\np{\tailelement} = \factor\np{\tailelement'} 
= \secondvalueelement$. 
Then, using the Weak Time Consistency 
property~\eqref{weak_time_consistency_representation}, we deduce 
$\aggreg\np{\headelement,\tailelement} = \aggreg\np{\headelement,\tailelement'}$. Thus, $\subaggreg\np{\headelement,\secondvalueelement}$ is 
reduced to one value for $\secondvalueelement \in \image{\factor}$. 
The set valued mapping $\subaggreg$ is thus a mapping and, using 
Equation~\eqref{eq_inclusion_AinS}, we obtain 
$\aggreg\np{\headelement,\tailelement} = 
\subaggreg\bp{\headelement,\factor\np{\tailelement}}$.
\item 
We suppose now that the set valued mapping~$\subaggreg$, 
defined in~\eqref{definition_subaggregator}, is a mapping. 
Since~$\subaggreg$ is a mapping, we deduce by Equation~\eqref{eq_inclusion_AinS} that 
$\aggreg\np{\headelement,\tailelement} = 
\subaggreg\bp{\headelement,\factor\np{\tailelement}}$ for all 
$\tailelement \in \TailSet$. 
Therefore, we have the implications: 
$\factor\np{\tailelement} = \factor\np{\tailelement'} 
\Rightarrow \subaggreg\bp{\headelement,\factor\np{\tailelement}} 
=\subaggreg\bp{\headelement,\factor\np{\tailelement'}} 
\Rightarrow \aggreg\np{\headelement,\tailelement} = 
\aggreg\np{\headelement,\tailelement'}$.
We conclude that the weak time 
consistency property~\eqref{weak_time_consistency_representation} is satisfied.
     \end{enumerate}
 In both cases, we have shown that 
Equation~\eqref{equation_proof_weak_nested_decomposition} holds true.
\end{proof}

\begin{ex}[The couple 
    $\vardelim{\textrm{AV@R}_{\beta}\nc{\cdot + \cdot}\eqsepv \textrm{AV@R}_{\beta}\nc{\cdot \mid \tribu{F}}}$ 
    is not Weak Time Consistent]

We now give an example inspired from 
\cite*[Sect.~5.3.2, p.~188]{Pflug-Pichler:2014} and involving the well known 
Average Value at Risk. It helps to illustrate our main result and the notions we have introduced so far.

Let $\omeg = \np{\omega_{1},\omega_{2}, \omega_{3}, \omega_{4}}$, that 
we equip with the uniform probability distribution \( \prbt = 
\frac{1}{4}\delta_{\omega_{1}}+ \frac{1}{4}\delta_{\omega_{2}}+
\frac{1}{4}\delta_{\omega_{3}}+ \frac{1}{4}\delta_{\omega_{4}} \).

We introduce the sets $\HeadSet = \TailSet = \RR^{|\omeg|} = \RR^{4}$.
On this finite space $\omeg$, the Average Value at Risk of level $\beta$ ($0 \leq \beta \leq 1$) of a random variable $\va{X} : \omeg \to \RR$ 
is defined by \cite*{Rockafellar-Uryasev:2000}
\begin{equation}
    \textrm{AV@R}_{\beta}
    \np{\va{X}}
    =
    \min_{\alpha \in \RR} \quad
    \left \{
    \alpha +
    \frac{1}{1-\beta}
    \besp{\prbt}{\nc{\va{X} - \alpha}^{+}}
    \right \}
    \eqfinp
\end{equation}
Let $\tribu{F} =\ba{\emptyset, 
\na{\omega_{1}, \omega_{2}}, \na{\omega_{3}, \omega_{4}}, \omeg}$
be a $\sigma$-field on the space $\omeg$.
The Conditional Average Value at Risk of level $\beta$, 
of a random variable $\va{X} : \omeg \to \RR$ with 
respect to the $\sigma$-field $\tribu{F}$ is defined 
by (\cite*{ruszczynski2010risk}, Example 3):
\begin{align}
    \textrm{AV@R}_{\beta}
    \np{\va{X} \mid \tribu{F}}
    &=
    \inf_{\va{U} \preceq \tribu{F}}
    \Bga{
        \va{U} + \frac{1}{1-\beta}
        \bespc{\prbt}{\nc{\va{X} - \va{U}}^{+}}{\tribu{F}}
    }
    \eqfinv
\end{align}
where the infimum is understood point-wise among all random variables $\va{U}$ that are $\tribu{F}$-measurable, and where the level $\beta$ may be an $\tribu{F}$-measurable function with values in an interval 
$\nc{\beta_{\min},\beta_{\max}} \subset [0,1)$.

We define two mappings
\begin{subequations}
    \begin{align}
        \aggreg:\ &\HeadSet \times \TailSet \to \RR 
        &\factor:\ &\TailSet \to \RR^{2}
        \\
        &\np{\headelement, \tailelement} \mapsto \textrm{AV@R}_{0.5}\nc{\headelement + \tailelement}
        \eqfinv
        &
        &\tailelement \mapsto \textrm{AV@R}_{0.5}\nc{\tailelement \mid \tribu{F}}
        \eqfinp
   \end{align}
\label{eq:factor_aggregator_AVaR}
\end{subequations}
Consider four elements: 
a head $\headelement_{0} = \np{0,0,0,0} \in \HeadSet$, 
a first tail $\tailelement_{0} = \np{3,3,2,1} \in \TailSet$, 
a second tail $\tailelement_{0}' = \np{1,3,2,2} \in \TailSet$ 
and an element of the factor's image $\secondvalueelement_{0} = \np{3,2} \in \secondvalueset$.
On the one hand, the elements $\factor\np{\tailelement_{0}}$ 
and $\factor\np{\tailelement'_{0}}$ are equal, because
            \begin{equation}
                \textrm{AV@R}_{0.5}\nc{\tailelement_{0}|\tribu{F}} = 
                \underbrace{
                    \np{3;2}
                       }_{\secondvalueelement_{0}}
                = 
                \textrm{AV@R}_{0.5}\nc{\tailelement'_{0}|\tribu{F}}
                \eqfinp
            \end{equation} 
On the other hand, the elements 
$\aggreg\np{\headelement_{0}, \tailelement_{0}}$
and $\aggreg\np{\headelement_{0}, \tailelement'_{0}}$ are not equal, because
        \begin{equation}
3 = \textrm{AV@R}_{0.5}\nc{\headelement_{0} + \tailelement_{0}}
\neq \textrm{AV@R}_{0.5}\nc{\headelement_{0} + \tailelement'_{0}} = 2.5
            \eqfinp
        \end{equation}
        The subaggregator $\subaggreg$ in~\eqref{definition_subaggregator} is not a mapping since 
\begin{equation}
    \subaggreg\np{\headelement_{0},\secondvalueelement_{0}}
    =
    \ba{\textrm{AV@R}_{0.5}\nc{\headelement_{0} + \tailelement} 
        \mid \textrm{AV@R}_{0.5}\nc{\tailelement \mid \tribu{F}} = \secondvalueelement_{0} }
    \supset
    \na{2.5;\  3}
    \eqfinv
\end{equation}
and therefore the couple $\np{\aggreg, \factor}$ in~\eqref{eq:factor_aggregator_AVaR}
is not Weak Time Consistent.
\end{ex}

\subsection{Extensions to Usual and Strong Time Consistency} 
\label{subsection_USTC}  
With additional order structures on the image sets~$\valueset$
and $\secondvalueset$ of the {aggregator}~$\aggreg$ 
and of the {factor}~$\factor$, and possibly on the head set~$\HeadSet$ 
--- all presented in~\eqref{eq:factor_aggregator} --- 
we define two additional notions of Time Consistency, usual and strong.

\subsubsection{Usual Time Consistency (UTC)}\label{section_UTC}

Suppose that the image sets $\valueset$ and $\secondvalueset$ 
are equipped with orders, denoted by $\leq$.

\begin{mydef}[Definition of Usual Time Consitency]\label{definition_usual_time_consistency}
 The couple  aggregator-factor
$\np{\aggreg, \factor}$ in~\eqref{eq:factor_aggregator}
is said to satisfy \emph{Usual Time Consistency (UTC)} if we have
    \begin{equation}
\label{time_consistency_representation}
            \factor\np{\tailelement} \leq \factor\np{\tailelement'} 
            \Rightarrow 
            \aggreg\np{\headelement,\tailelement} \leq 
\aggreg\np{\headelement,\tailelement'}
            \eqsepv
            \forall \headelement \in \HeadSet
            \eqsepv \forall \np{\tailelement,\tailelement'} 
\in \TailSet^{2}
            \eqfinp
    \end{equation}
\end{mydef}

We extend the result of Theorem~\ref{WTC_theorem} as follows.

\begin{prop}[Nested decomposition of UTC mappings]
\label{decomposition_time_consistent}
The couple $\np{\aggreg, \factor}$ in~\eqref{eq:factor_aggregator} 
is UTC if and only if the set valued mapping $\subaggreg$ 
in~\eqref{definition_subaggregator} is a mapping 
and is \emph{increasing\footnote{Let $\mathbb{X}$ and $\mathbb{Y}$ be sets endowed with orders denoted by $\leq$.
    A mapping $M : \mathbb{X} \to \mathbb{Y}$ is said to be \emph{increasing} if
 \(        x \leq x' \Rightarrow M\np{x} \leq M\np{x'} \). } in its second argument}. 
In that case, 
the Nested Formula~\eqref{equation_proof_weak_nested_decomposition}
holds true. 
\end{prop}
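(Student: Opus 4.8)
The plan is to reduce the proposition to Theorem~\ref{WTC_theorem} and then append the monotonicity statement, using the Nested Formula as the bridge. First I would observe that UTC implies WTC: if $\factor\np{\tailelement} = \factor\np{\tailelement'}$, then both $\factor\np{\tailelement} \leq \factor\np{\tailelement'}$ and $\factor\np{\tailelement'} \leq \factor\np{\tailelement}$ hold, so~\eqref{time_consistency_representation} yields $\aggreg\np{\headelement,\tailelement} \leq \aggreg\np{\headelement,\tailelement'}$ together with the reverse inequality; since $\leq$ on~$\valueset$ is a genuine (antisymmetric) order, these force $\aggreg\np{\headelement,\tailelement} = \aggreg\np{\headelement,\tailelement'}$. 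Thus WTC holds, and Theorem~\ref{WTC_theorem} guarantees that $\subaggreg$ is a mapping and that the Nested Formula~\eqref{equation_proof_weak_nested_decomposition} holds. This settles the ``$\subaggreg$ is a mapping'' and Nested Formula parts in both directions; what remains is to pair UTC with the monotonicity of $\subaggreg$ in its second argument.

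For the forward direction, assuming UTC, I would prove that the (now single-valued) $\subaggreg$ is increasing in its second argument. Fix $\headelement \in \HeadSet$ and take $\secondvalueelement \leq \secondvalueelement'$ in $\image{\factor}$. Choosing preimages $\tailelement \in \factor^{-1}\np{\secondvalueelement}$ and $\tailelement' \in \factor^{-1}\np{\secondvalueelement'}$, one has $\factor\np{\tailelement} = \secondvalueelement \leq \secondvalueelement' = \factor\np{\tailelement'}$. Applying~\eqref{time_consistency_representation} gives $\aggreg\np{\headelement,\tailelement} \leq \aggreg\np{\headelement,\tailelement'}$, and since $\subaggreg$ is a mapping the Nested Formula rewrites this as $\subaggreg\bp{\headelement,\secondvalueelement} \leq \subaggreg\bp{\headelement,\secondvalueelement'}$. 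As the value does not depend on the chosen preimage, this is exactly monotonicity in the second argument.

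For the reverse direction, assuming $\subaggreg$ is a mapping and increasing in its second argument, I would read off UTC directly from the Nested Formula. Suppose $\factor\np{\tailelement} \leq \factor\np{\tailelement'}$. Monotonicity of $\subaggreg$ in its second argument gives $\subaggreg\bp{\headelement,\factor\np{\tailelement}} \leq \subaggreg\bp{\headelement,\factor\np{\tailelement'}}$, and the Nested Formula~\eqref{equation_proof_weak_nested_decomposition} identifies the two sides with $\aggreg\np{\headelement,\tailelement}$ and $\aggreg\np{\headelement,\tailelement'}$ respectively, which is~\eqref{time_consistency_representation}.

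The only delicate point, rather than a genuine obstacle, is the passage from UTC to WTC: it relies on the antisymmetry built into the word \emph{order}, which is precisely what lets two opposite inequalities collapse into an equality and thereby lets me reuse Theorem~\ref{WTC_theorem} verbatim. Everything else is a mechanical transport of inequalities through the Nested Formula, with the single-valuedness of $\subaggreg$ ensuring that the chosen representatives $\tailelement, \tailelement'$ play no role.
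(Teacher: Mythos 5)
Your proof is correct and matches the paper's intent: the paper leaves this proposition to the reader as ``the proof of Theorem~\ref{WTC_theorem} with small variations,'' and your reduction --- deriving WTC from UTC via reflexivity and antisymmetry of the order on~$\valueset$, invoking Theorem~\ref{WTC_theorem} for single-valuedness and the Nested Formula~\eqref{equation_proof_weak_nested_decomposition}, then transporting inequalities through that formula in both directions --- is precisely that variation. Your explicit flagging of the antisymmetry point is a welcome precision, since UTC would not imply WTC if $\leq$ were merely a preorder.
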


The proof is left to the reader as it follows 
the proof of Theorem~\ref{WTC_theorem} with small variations.

\subsubsection{Strong Time Consistency (STC)}

Suppose that the head set~$\HeadSet$
and the image sets~$\valueset$ and $\secondvalueset$ 
are equipped with orders, denoted by $\leq$.

\begin{mydef}[Definition of Strong Time Consistency]
    The couple $\np{\aggreg, \factor}$ in Equation~\eqref{eq:factor_aggregator} 
is said to satisfy \emph{Strong Time Consistency} (STC) if we have    
\begin{equation}
\label{strong_time_consistency_representation}
            \left .
            \begin{array}{rcl}
                \factor\np{\tailelement} &\leq& \factor\np{\tailelement'}\\
                \headelement &\leq& \headelement'
            \end{array}
            \right \}
            \Rightarrow 
            \aggreg \np{\headelement,\tailelement} \leq 
\aggreg\np{\headelement',\tailelement'}
            \eqsepv 
            \forall \np{\headelement, \headelement', 
\tailelement,\tailelement'} \in \HeadSet^{2} \times \TailSet^{2}
            \eqfinp
    \end{equation}
\end{mydef}

We extend the results of Theorem~\ref{WTC_theorem} as follows.

\begin{prop}[Nested decomposition for STC mappings]
\label{proposition_characterization_STC_sublevel}
The couple $\np{\aggreg, \factor}$ in~\eqref{eq:factor_aggregator} 
is STC if and only if the set valued mapping $\subaggreg$ 
is a mapping \emph{increasing in its first and second arguments}. 
In that case, 
the Nested Formula~\eqref{equation_proof_weak_nested_decomposition}
holds true. 
\end{prop}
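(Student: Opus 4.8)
The plan is to mirror the proof of Theorem~\ref{WTC_theorem}, relying again on the universal inclusion~\eqref{eq_inclusion_AinS}, namely $\aggreg\np{\headelement,\tailelement} \in \subaggreg\bp{\headelement,\factor\np{\tailelement}}$, and then to superimpose the order structure on $\HeadSet$, $\valueset$ and $\secondvalueset$. The proof splits into the two implications of the equivalence, and in both directions the single-valuedness of $\subaggreg$ is the hinge that turns the inclusion into the Nested Formula~\eqref{equation_proof_weak_nested_decomposition}.

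For the forward implication, I would first argue that STC forces $\subaggreg$ to be a mapping. Taking $\headelement = \headelement'$ in~\eqref{strong_time_consistency_representation} and using reflexivity of the order on $\HeadSet$, STC specializes to the UTC implication $\factor\np{\tailelement} \leq \factor\np{\tailelement'} \Rightarrow \aggreg\np{\headelement,\tailelement} \leq \aggreg\np{\headelement,\tailelement'}$. Applying this with $\factor\np{\tailelement} = \factor\np{\tailelement'}$, so that $\leq$ holds in both directions, and invoking antisymmetry of the order on $\valueset$, I recover $\aggreg\np{\headelement,\tailelement} = \aggreg\np{\headelement,\tailelement'}$, which is exactly WTC~\eqref{weak_time_consistency_representation}; by Theorem~\ref{WTC_theorem}, $\subaggreg$ is then a mapping and~\eqref{equation_proof_weak_nested_decomposition} holds. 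For monotonicity, I would fix $\np{\headelement,\secondvalueelement}$ and $\np{\headelement',\secondvalueelement'}$ in $\HeadSet \times \image{\factor}$ with $\headelement \leq \headelement'$ and $\secondvalueelement \leq \secondvalueelement'$, choose witnesses $\tailelement \in \factor^{-1}\np{\secondvalueelement}$ and $\tailelement' \in \factor^{-1}\np{\secondvalueelement'}$, and apply STC directly to the pair $\np{\headelement,\tailelement}$, $\np{\headelement',\tailelement'}$; since $\subaggreg$ is now single-valued, this reads $\subaggreg\np{\headelement,\secondvalueelement} = \aggreg\np{\headelement,\tailelement} \leq \aggreg\np{\headelement',\tailelement'} = \subaggreg\np{\headelement',\secondvalueelement'}$.

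For the reverse implication, assuming $\subaggreg$ is a mapping increasing in both arguments, the inclusion~\eqref{eq_inclusion_AinS} collapses to the Nested Formula~\eqref{equation_proof_weak_nested_decomposition} exactly as in the second part of the proof of Theorem~\ref{WTC_theorem}. Then, given $\factor\np{\tailelement} \leq \factor\np{\tailelement'}$ and $\headelement \leq \headelement'$, I would chain the Nested Formula with the monotonicity of $\subaggreg$ to obtain $\aggreg\np{\headelement,\tailelement} = \subaggreg\bp{\headelement,\factor\np{\tailelement}} \leq \subaggreg\bp{\headelement',\factor\np{\tailelement'}} = \aggreg\np{\headelement',\tailelement'}$, which is~\eqref{strong_time_consistency_representation}.

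The only delicate point is the single-valuedness step, which quietly uses antisymmetry of the order on $\valueset$ to promote a two-sided inequality into an equality; this is precisely what makes STC stronger than mere monotonicity. A secondary subtlety is the reading of \emph{increasing in its first and second arguments}: whether it means joint monotonicity in the product order (which STC delivers at once) or separate monotonicity in each argument is immaterial, since transitivity lets one pass between the two through the intermediate point $\subaggreg\np{\headelement',\secondvalueelement}$, and reflexivity lets one recover each separate statement from the joint one by freezing the other coordinate.
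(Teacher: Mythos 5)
Your proof is correct and is exactly the argument the paper intends: the paper leaves this proof to the reader as ``the proof of Theorem~\ref{WTC_theorem} with small variations'', and your write-up carries out precisely those variations --- specializing STC to WTC via reflexivity and antisymmetry to get single-valuedness, then transferring the order through witnesses and the inclusion~\eqref{eq_inclusion_AinS}. Your closing remarks on antisymmetry and on the equivalence of joint versus separate monotonicity correctly identify the only delicate points, so nothing is missing.
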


The proof is left to the reader as it follows 
the proof of Theorem~\ref{WTC_theorem} with small variations.

\subsubsection{Summing up results about WTC, UTC and STC}

In~\S\ref{subsection_WTC} and~\S\ref{subsection_USTC}, 
we have introduced three notions of Time Consistency, from the weakest to the
strongest. Of course, we have that a Strong Time Consistent couple is also Usual Time 
Consistent, and that a Usual Time Consistent couple is also Weak Time Consistent.
We sum up the different definitions and results in Table~\ref{table_time_consistency}.

\begin{table}[!ht]
    \centering
        \footnotesize
        \begin{tabular}{|c|c|c|c|}
        \hline
        &\multicolumn{3}{c|}{
            Weak~\eqref{weak_time_consistency_representation}
            \phantom{iiiiiii}$\Leftarrow$\phantom{iiiiiiiiiii} 
            Usual~\eqref{time_consistency_representation}
            \phantom{iiiiiiiiiii}$\Leftarrow$\phantom{iiiiiii}
            Strong~\eqref{strong_time_consistency_representation}}\\
        \hline
        \hline
        Definition
        &
        $
         \begin{aligned}
             \factor\np{\tailelement} &=  \factor\np{\tailelement'}\\
            &\Downarrow\\
            \aggreg\np{\headelement, \tailelement} 
            &= \aggreg\np{\headelement,\tailelement'}
        \end{aligned}
        $
        &
        $
        \begin{aligned}
            \factor\np{\tailelement} &\leq  \factor\np{\tailelement'}\\
            &\Downarrow\\
            \aggreg\np{\headelement, \tailelement} 
            &\leq \aggreg\np{\headelement,\tailelement'}
        \end{aligned}
        $
        &
        $
       \begin{aligned}
           \headelement &\leq  \headelement' \eqfinv\\
           \factor\np{\tailelement} &\leq  \factor\np{\tailelement'}\\
            &\Downarrow\\
            \aggreg\np{\headelement, \tailelement} 
            &\leq \aggreg\np{\headelement',\tailelement'}
        \end{aligned}
        $
        \\
        \hline
        $
        \begin{array}{c}
            \textrm{Characterization}\\
            \textrm{in terms of}\\
            \textrm{subaggregator}
        \end{array}
        $
        &
        \pbox{6cm}{$\subaggreg$ is a mapping}
        &
        $
        \begin{array}{c}
            \subaggreg \textrm{ is a mapping}\\
            \textrm{increasing}\\
            \textrm{in its second argument}
        \end{array}
        $
        &
        $
        \begin{array}{c}
            \subaggreg \textrm{ is a mapping}\\
            \textrm{increasing}\\
            \textrm{in both arguments}
        \end{array}
        $
        \\
        \hline
    \end{tabular}
    \caption{Characterization of Time Consistency in terms of subaggregator
\label{table_time_consistency}}
\end{table}

\subsection{Analytical properties of time consistent mappings}
\label{subsection_properties}

Here, we study properties inherited by the subaggregator~$\subaggreg$
in~\eqref{definition_subaggregator} when it is a mapping, that is, 
when the couple $\np{\aggreg, \factor}$ is Weak Time Consistent (see
Theorem~\ref{WTC_theorem}). 
 We insist that, in this part, we study how properties of 
    the subaggregator~$\subaggreg$ can be deduced from properties 
of aggregator~$\aggreg$ and factor~$\factor$.
Thus, our approach differs from other approaches in the literature,
like     \cite{Ruszczynski-Shapiro:2006},
    where properties of $\aggreg$ are deduced from properties of 
    $\subaggreg$ and $\factor$.
We focus on monotonicity, 
continuity, convexity, positive homogeneity and translation invariance.

\subsubsection{Monotonicity}

We suppose that the head set~$\HeadSet$,
the tail set $\TailSet$,
and the image sets~$\valueset$ and $\secondvalueset$ 
--- all presented in~\eqref{eq:factor_aggregator} --- 
are equipped with orders, denoted by $\leq$.
The proof of the following proposition is left to the reader 
as a direct application of the Nested Formula~\eqref{equation_proof_weak_nested_decomposition}.

\begin{prop}[Monotonicity]
    Let the couple $\np{\aggreg, \factor}$ be Weak Time Consistent,
    as in Definition~\ref{def_WTC}.
    If the mapping $\aggreg$ is increasing in its first argument, 
    then the subaggregator $\subaggreg$ in~\eqref{definition_subaggregator} is increasing in its first argument.
\end{prop}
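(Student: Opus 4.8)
The plan is to derive everything from the Nested Formula~\eqref{equation_proof_weak_nested_decomposition} established in Theorem~\ref{WTC_theorem}. Since the couple $\np{\aggreg, \factor}$ is assumed Weak Time Consistent, that theorem guarantees that the subaggregator $\subaggreg$ in~\eqref{definition_subaggregator} is a genuine single-valued mapping and that $\aggreg\np{\headelement, \tailelement} = \subaggreg\bp{\headelement, \factor\np{\tailelement}}$ holds for every $\headelement \in \HeadSet$ and $\tailelement \in \TailSet$. This identity is precisely the tool that converts the hypothesis on $\aggreg$ into the desired conclusion on $\subaggreg$.

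First I would fix two elements $\headelement \leq \headelement'$ in $\HeadSet$ and an arbitrary $\secondvalueelement \in \image{\factor}$, with the goal of proving $\subaggreg\np{\headelement, \secondvalueelement} \leq \subaggreg\np{\headelement', \secondvalueelement}$. Since $\secondvalueelement$ lies in the image of $\factor$, I would pick some $\tailelement \in \TailSet$ with $\factor\np{\tailelement} = \secondvalueelement$, that is, $\tailelement \in \factor^{-1}\np{\secondvalueelement}$. Evaluating both quantities through the Nested Formula then gives $\subaggreg\np{\headelement, \secondvalueelement} = \subaggreg\bp{\headelement, \factor\np{\tailelement}} = \aggreg\np{\headelement, \tailelement}$ and likewise $\subaggreg\np{\headelement', \secondvalueelement} = \aggreg\np{\headelement', \tailelement}$.

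It then suffices to invoke the hypothesis that $\aggreg$ is increasing in its first argument: from $\headelement \leq \headelement'$ we obtain $\aggreg\np{\headelement, \tailelement} \leq \aggreg\np{\headelement', \tailelement}$, and substituting back through the two identities above yields $\subaggreg\np{\headelement, \secondvalueelement} \leq \subaggreg\np{\headelement', \secondvalueelement}$, which is exactly monotonicity of $\subaggreg$ in its first argument.

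There is no serious obstacle here; the argument is essentially a one-line consequence of the Nested Formula, which is why the authors leave it to the reader. The only point deserving a moment's attention is that the value $\subaggreg\np{\headelement, \secondvalueelement}$ must not depend on which preimage $\tailelement \in \factor^{-1}\np{\secondvalueelement}$ one selects — but this independence is exactly the content of $\subaggreg$ being well-defined as a mapping, which Theorem~\ref{WTC_theorem} already supplies under the WTC assumption. Hence any choice of $\tailelement$ in $\factor^{-1}\np{\secondvalueelement}$ is legitimate and the conclusion is unaffected.
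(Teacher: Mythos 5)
Your proof is correct and is precisely the argument the paper intends: it leaves the proof to the reader ``as a direct application of the Nested Formula~\eqref{equation_proof_weak_nested_decomposition}'', which is exactly how you proceed, including the apt remark that well-definedness of $\subaggreg$ (from Theorem~\ref{WTC_theorem}) makes the choice of preimage $\tailelement \in \factor^{-1}\np{\secondvalueelement}$ immaterial.
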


\subsubsection{Continuity}
We suppose that the head set~$\HeadSet$,
the tail set $\TailSet$,
and the image sets~$\valueset$ and $\secondvalueset$ 
are metric spaces.

\begin{prop}[Continuity]\label{prop_continuity_subaggreg}
    Let the couple $\np{\aggreg, \factor}$ be Weak Time Consistent,
    as in Definition~\ref{def_WTC}.
    Assume that the tail set~$\TailSet$ is compact.
    If the factor $\factor$ is continuous
    and if the aggregator $\aggreg$ is continuous with a compact
    image~$\mathrm{Im}\np{\aggreg}=\aggreg\np{\HeadSet \times \TailSet}$,
    then the subaggregator $\subaggreg$ in~\eqref{definition_subaggregator}
    is continuous on \( \HeadSet \times \mathrm{Im}\np{\factor} \). 
\end{prop}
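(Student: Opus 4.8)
The plan is to reduce everything to the Nested Formula and then run a compactness argument. First I would invoke Theorem~\ref{WTC_theorem}: since $\np{\aggreg, \factor}$ is WTC, the subaggregator $\subaggreg$ is a genuine (single-valued) mapping and the Nested Formula~\eqref{equation_proof_weak_nested_decomposition} holds, so that $\aggreg\np{\headelement,\tailelement} = \subaggreg\bp{\headelement, \factor\np{\tailelement}}$ for every $\np{\headelement,\tailelement} \in \HeadSet \times \TailSet$. In particular, for any $\np{\headelement, \secondvalueelement} \in \HeadSet \times \image{\factor}$ and any preimage $\tailelement \in \factor^{-1}\np{\secondvalueelement}$ one has $\subaggreg\np{\headelement, \secondvalueelement} = \aggreg\np{\headelement, \tailelement}$, the value being independent of the chosen $\tailelement$; this preimage-independence is exactly the content of WTC, and it is the property I will exploit repeatedly.

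Since all the spaces involved are metric, it suffices to prove sequential continuity. I would take a sequence $\np{\headelement_{n}, \secondvalueelement_{n}}$ in $\HeadSet \times \image{\factor}$ converging to a point $\np{\headelement, \secondvalueelement}$, and for each $n$ select a preimage $\tailelement_{n} \in \factor^{-1}\np{\secondvalueelement_{n}}$, which is nonempty because $\secondvalueelement_{n} \in \image{\factor}$. By the Nested Formula we then have $\subaggreg\np{\headelement_{n}, \secondvalueelement_{n}} = \aggreg\np{\headelement_{n}, \tailelement_{n}}$, and the goal becomes to show $\aggreg\np{\headelement_{n}, \tailelement_{n}} \to \subaggreg\np{\headelement, \secondvalueelement}$.

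The key step uses compactness of $\TailSet$: from $\np{\tailelement_{n}}$ I extract a convergent subsequence $\tailelement_{n_{k}} \to \tailelement^{\star} \in \TailSet$. Continuity of $\factor$ gives $\secondvalueelement_{n_{k}} = \factor\np{\tailelement_{n_{k}}} \to \factor\np{\tailelement^{\star}}$, and since also $\secondvalueelement_{n_{k}} \to \secondvalueelement$ we deduce $\factor\np{\tailelement^{\star}} = \secondvalueelement$, that is $\tailelement^{\star} \in \factor^{-1}\np{\secondvalueelement}$. By the preimage-independence noted above, $\subaggreg\np{\headelement, \secondvalueelement} = \aggreg\np{\headelement, \tailelement^{\star}}$. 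As $\np{\headelement_{n_{k}}, \tailelement_{n_{k}}} \to \np{\headelement, \tailelement^{\star}}$ in $\HeadSet \times \TailSet$, continuity of $\aggreg$ yields $\subaggreg\np{\headelement_{n_{k}}, \secondvalueelement_{n_{k}}} = \aggreg\np{\headelement_{n_{k}}, \tailelement_{n_{k}}} \to \aggreg\np{\headelement, \tailelement^{\star}} = \subaggreg\np{\headelement, \secondvalueelement}$.

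The main obstacle is that the above only gives convergence along one subsequence, a subsequence moreover tied to the arbitrary choice of preimages $\tailelement_{n}$. To upgrade to full convergence I would argue by contradiction: if $\subaggreg\np{\headelement_{n}, \secondvalueelement_{n}} \not\to \subaggreg\np{\headelement, \secondvalueelement}$, there exist $\varepsilon > 0$ and a subsequence remaining at distance at least $\varepsilon$ from $\subaggreg\np{\headelement, \secondvalueelement}$; applying the compactness extraction of the previous paragraph \emph{inside} that subsequence produces a further subsequence converging to $\subaggreg\np{\headelement, \secondvalueelement}$, contradicting the distance bound. The compact image hypothesis on $\image{\aggreg}$ is what guarantees that the value sequence $\aggreg\np{\headelement_{n}, \tailelement_{n}}$ lives in a compact metric space, so that the principle ``a sequence converges as soon as all its subsequential limits coincide'' applies and renders this last passage transparent.
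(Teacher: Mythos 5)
Your proof is correct, and its skeleton coincides with the paper's: pick preimages $\tailelement_{n} \in \factor^{-1}\np{\secondvalueelement_{n}}$, identify $\subaggreg\np{\headelement_{n},\secondvalueelement_{n}}$ with $\aggreg\np{\headelement_{n},\tailelement_{n}}$ via the Nested Formula, extract a convergent tail subsequence by compactness of $\TailSet$, identify its limit $\tailelement^{\star}$ as a preimage of $\secondvalueelement$ by continuity of $\factor$, and conclude with preimage-independence (WTC) and continuity of $\aggreg$. Where you genuinely differ is the packaging of the final step. The paper computes the whole set of limit points $\mathcal{L}\bp{\na{\aggreg\np{\headelement_{n},\tailelement_{n}}}}$ and uses the compact-image hypothesis on $\mathrm{Im}\np{\aggreg}$ twice: explicitly, to show this set is nonempty, and implicitly, when concluding that a sequence whose limit-point set is a singleton converges (a principle that fails without relative compactness of the values). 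Your subsequence-of-a-subsequence contradiction sidesteps both uses: inside the offending $\varepsilon$-separated subsequence, compactness of $\TailSet$ alone yields a subsubsequence of tails converging to some $\tailelement^{\star}$, and continuity of $\aggreg$ then forces the corresponding values to converge to $\subaggreg\np{\headelement,\secondvalueelement}$, contradicting the separation. So your route in fact proves the proposition \emph{without} assuming $\mathrm{Im}\np{\aggreg}$ compact, a small sharpening of the stated result. One nitpick: your closing sentence misattributes the role of that hypothesis --- it plays no part in your contradiction step, and is only needed if one argues, as the paper does, via ``a sequence converges as soon as all its subsequential limits coincide''; that sentence could simply be deleted.
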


\begin{proof}
We prove the continuity of the subaggregator $\subaggreg$ on \( \HeadSet \times
    \image{\factor} \) 
by using the sequential characterization of the continuity on metric spaces.
For this purpose, we consider, on the one hand, 
$\np{\bar{\headelement}, \bar{\secondvalueelement}}$ 
element of $\HeadSet \times \image{\factor}$ and, on the other hand,
$\np{\headelement_{n}}_{n \in \mathbb{N}}$ 
    a sequence of elements of $\HeadSet$ converging to $\bar{\headelement}$
    and $\np{\secondvalueelement_{n}}_{n \in \mathbb{N}}$
    a sequence of elements of $\image{\factor}$ converging to
    $\bar{\secondvalueelement}$.
   We will show that $\subaggreg\np{\headelement_{n},\secondvalueelement_{n}}$ 
    converges to $\subaggreg\np{\bar{\headelement}, \bar{\secondvalueelement}}$.
    We introduce the notation $\mathcal{L}\bp{\na{u_{n}}}$ to denote the set of limit points 
    of a sequence $\np{u_{n}}_{n \in \mathbb{N}}$.

As \( \secondvalueelement_{n} \in \image{\factor} \), 
    there exists an element $\tailelement_{n} \in \TailSet$ such that 
    $\factor \np{\tailelement_{n}} = \secondvalueelement_{n}$ for each~$n$.
By the Nested Formula~\eqref{equation_proof_weak_nested_decomposition}, we
deduce that
\begin{equation}\label{eq_proof_continuity_1}
  \aggreg\np{\headelement_{n},\tailelement_{n}} 
    = \subaggreg\bp{\headelement_{n},\factor\np{\tailelement_{n}}} 
    =\subaggreg\bp{\headelement_{n},\secondvalueelement_{n}} \eqfinp
\end{equation}
We will now show that the set $\mathcal{L}\bp{ \ba{
    \aggreg\np{\headelement_{n},\tailelement_{n}}}}$
of limit points is reduced to the singleton 
$ \{ \subaggreg\np{\bar{\headelement}, \bar{\secondvalueelement}} \} $.
The proof is in several steps as follows:
\begin{enumerate}
\item 
$\mathcal{L}\bp{ \ba{ \aggreg\np{\headelement_{n},\tailelement_{n}}}} \not =
\emptyset$,
\item 
\( \mathcal{L}\bp{\ba{\aggreg\np{\headelement_{n},\tailelement_{n}}}}
        \subset
        \aggreg\bp{\bar{\headelement}, \mathcal{L}\np{\na{\tailelement_{n}}}} \),
\item 
\( \aggreg\bp{\bar{\headelement}, \mathcal{L}\np{\na{\tailelement_{n}}}} \) is
reduced to  the singleton 
$ \{ \subaggreg\np{\bar{\headelement}, \bar{\secondvalueelement}} \} $,
\item 
$\mathcal{L}\bp{ \ba{ \aggreg\np{\headelement_{n},\tailelement_{n}}}} =
\{ \subaggreg\np{\bar{\headelement}, \bar{\secondvalueelement}} \} $.
\end{enumerate}
 Here is the proof.

 \begin{enumerate}
    \item 
As the sequence $\bp{ \aggreg\np{\headelement_{n},\tailelement_{n}}}_{n \in
  \mathbb{N}}$
takes value in the compact set $\mathrm{Im}\np{\aggreg}$, we have that 
    $\mathcal{L}\bp{ \ba{ \aggreg\np{\headelement_{n},\tailelement_{n}}}} \neq
    \varnothing$.
    
    \item 
We prove that
    \(
        \mathcal{L}\bp{\ba{\aggreg\np{\headelement_{n},\tailelement_{n}}}}
        \subset
        \aggreg\bp{\bar{\headelement}, \mathcal{L}\np{\na{\tailelement_{n}}}}
            \).
    Let $a$ be an element of the set $\mathcal{L}\bp{\na{\aggreg\np{\headelement_{n},\tailelement_{n}}}}$.
    By definition of this latter set, there exists a subsequence
    $\bp{\aggreg\np{\headelement_{\Phi\np{n}},\tailelement_{\Phi\np{n}}}}_{n \in \NN}$
    converging to~$a$.
Now, we know that $\np{\headelement_{\Phi\np{n}}}_{n \in \NN}$ converges
to~$\bar{\headelement}$, but it
is not necessarily the case that $\np{\tailelement_{\Phi\np{n}}}_{n \in
\NN}$ converges. However, by compacity of the tail set $\TailSet$, 
there exist a subsequence $\np{t_{\Psi\circ\Phi\np{n}}}_{n \in \NN}$ 
    converging to a certain~$\bar{t} \in \mathcal{L}\bp{\na{t_{n}}}$.
    As the sequence $\bp{\aggreg\np{\headelement_{\Phi\np{n}},\tailelement_{\Phi\np{n}}}}_{n \in \NN}$
    is converging to~$a$, 
    the subsequence $\bp{\aggreg\np{\headelement_{\Psi\circ\Phi\np{n}},\tailelement_{\Psi\circ\Phi\np{n}}}}_{n \in \NN}$
    is also converging to~$a$.
Now that both inner subsequences converge, we use the continuity of the
mapping~$\aggreg$, and obtain that
    \(
    a =
    \lim_{n \to \infty}  \aggreg\np{\headelement_{\Psi\circ\Phi\np{n}},\tailelement_{\Psi\circ\Phi\np{n}}}
    =
    \aggreg\np{\bar{\headelement}, \bar{\tailelement}}
    \in
    \aggreg\bp{\bar{\headelement}, \mathcal{L}\np{\na{\tailelement_{n}}}}
    \).
    \item 
We prove the equality
    \( \aggreg\bp{\bar{\headelement}, \mathcal{L}\np{\na{\tailelement_{n}}}} 
    =
    \{ \subaggreg\np{\bar{\headelement}, \bar{\secondvalueelement}} \} \).
 Since the set $\mathcal{L}\bp{\na{\tailelement_{n}}}$ is not empty by compactness of $\TailSet$,
 we consider $\np{\bar{\tailelement}, \bar{\tailelement}'} \in \mathcal{L}\bp{\na{\tailelement_{n}}}^2$ 
    any two limits points of the sequence $\np{\tailelement_{n}}_{n \in
      \mathbb{N}}$.
As    $\factor \np{\tailelement_{n}} = \secondvalueelement_{n}$ and 
\( \lim_{n \to \infty} \secondvalueelement_{n} = \bar{\secondvalueelement} \),
we deduce that $\factor\np{\bar{\tailelement}} = \bar{\secondvalueelement} = \factor\np{\bar{\tailelement}'}$,
    by continuity of the factor mapping~$\factor$.
    The Nested Formula~\eqref{equation_proof_weak_nested_decomposition} gives
    \begin{equation*}
        \aggreg\np{\bar{\headelement},\bar{\tailelement}} 
        =
        \subaggreg\bp{\bar{\headelement}, \factor\np{\bar{\tailelement}}}
        =
        \subaggreg\np{\bar{\headelement}, \bar{\secondvalueelement}}
        =
        \subaggreg\bp{\bar{\headelement}, \factor\np{\bar{\tailelement}'}}
        =
        \aggreg\np{\bar{\headelement},\bar{\tailelement}'}
        \eqfinp
    \end{equation*}
This proves that \( \aggreg\bp{\bar{\headelement}, \mathcal{L}\np{\na{\tailelement_{n}}}} 
    =
    \{ \subaggreg\np{\bar{\headelement}, \bar{\secondvalueelement}} \} \).
    \item 
Gathering up the previous results, we obtain that
\begin{equation}
     \varnothing  \neq  \mathcal{L}\bp{ \ba{
         \aggreg\np{\headelement_{n},\tailelement_{n}}}}
\subset \aggreg\bp{\bar{\headelement}, \mathcal{L}\np{\na{\tailelement_{n}}}} 
    =
    \{ \subaggreg\np{\bar{\headelement}, \bar{\secondvalueelement}} \} \eqfinp
\end{equation}
We conclude that \( 
     \mathcal{L}\bp{ \ba{\aggreg\np{\headelement_{n},\tailelement_{n}}}}
     =
     \{ \subaggreg\np{\bar{\headelement}, \bar{\secondvalueelement}} \} \).
  \end{enumerate}
From Equation~\eqref{eq_proof_continuity_1}, we have the equalities 
    \( 
    \mathcal{L}\bp{\ba{\subaggreg\bp{\headelement_{n},\secondvalueelement_{n}}}}
    =
    \mathcal{L}\bp{ \ba{\aggreg\np{\headelement_{n},\tailelement_{n}}}}
    \)
    \(
    =
    \{ \subaggreg\np{\bar{\headelement}, \bar{\secondvalueelement}} \}
    \eqfinp
     \)
Therefore, the sequence $\subaggreg\np{\headelement_{n},\secondvalueelement_{n}}$ 
    converges to $\subaggreg\np{\bar{\headelement}, \bar{\secondvalueelement}}$.
This ends the proof.
\end{proof}

\subsubsection{Convexity}
\label{prop_convexity_subaggreg}

As we are dealing with convexity property, 
we assume that the sets $\HeadSet, \TailSet$ and $\secondvalueset$
in~\eqref{eq:factor_aggregator} are vector spaces. 
We also suppose that the aggregator \(  \aggreg: \HeadSet \times \TailSet \to \valueset \)
in~\eqref{eq:factor_aggregator} takes extended real values, that is, $\valueset = \RR \cup \na{-\infty, +\infty}$.

\begin{prop}\label{inheritance_convexity}
    Let the couple $\np{\aggreg, \factor}$ be Weak Time Consistent,
    as in Definition~\ref{def_WTC}.
    If there exists a nonempty convex subset $\bar{\TailSet} \subset \TailSet$ 
    such that $\factor\np{\bar{\TailSet}} = \mathrm{Im}\np{\factor}$ 
    and that the restricted function $\factor_{|\bar{\TailSet}}$ is affine, 
    and if the aggregator $\aggreg$ is jointly convex,
    then the subaggregator $\subaggreg$ in~\eqref{definition_subaggregator} is jointly convex on \( \HeadSet \times \mathrm{Im}\np{\factor} \).
\end{prop}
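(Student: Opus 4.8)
The plan is to derive the joint convexity of $\subaggreg$ directly from the Nested Formula of Theorem~\ref{WTC_theorem} together with the joint convexity of $\aggreg$, the bridge between the two being an \emph{affine lift} of the second argument through the set $\bar{\TailSet}$. First I would observe that the statement is well-posed: since $\bar{\TailSet}$ is convex and the restriction $\factor_{|\bar{\TailSet}}$ is affine, the image $\factor\np{\bar{\TailSet}} = \image{\factor}$ is a convex subset of the vector space $\secondvalueset$, so the domain $\HeadSet \times \image{\factor}$ on which convexity is asserted is itself convex and closed under the convex combinations appearing below.

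Next I would fix two points $\np{\headelement_{1}, \secondvalueelement_{1}}$ and $\np{\headelement_{2}, \secondvalueelement_{2}}$ in $\HeadSet \times \image{\factor}$ and a scalar $\lambda \in \nc{0,1}$. The crucial step is to lift $\secondvalueelement_{1}$ and $\secondvalueelement_{2}$ into $\bar{\TailSet}$: because $\factor\np{\bar{\TailSet}} = \image{\factor}$, there exist $\tailelement_{1}, \tailelement_{2} \in \bar{\TailSet}$ with $\factor\np{\tailelement_{i}} = \secondvalueelement_{i}$. Evaluating the Nested Formula~\eqref{equation_proof_weak_nested_decomposition} at these tails turns subaggregator values into aggregator values, namely $\subaggreg\np{\headelement_{i}, \secondvalueelement_{i}} = \aggreg\np{\headelement_{i}, \tailelement_{i}}$ for $i \in \na{1,2}$, which is exactly the form needed to invoke convexity of $\aggreg$.

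Then I would form the convex combinations $\headelement_{\lambda} = \lambda \headelement_{1} + \np{1-\lambda}\headelement_{2}$ and $\tailelement_{\lambda} = \lambda \tailelement_{1} + \np{1-\lambda}\tailelement_{2}$. Convexity of $\bar{\TailSet}$ guarantees $\tailelement_{\lambda} \in \bar{\TailSet}$, and the affineness of $\factor_{|\bar{\TailSet}}$ yields the key identity $\factor\np{\tailelement_{\lambda}} = \lambda \secondvalueelement_{1} + \np{1-\lambda}\secondvalueelement_{2}$. Applying the Nested Formula once more at $\tailelement_{\lambda}$ and then the joint convexity of $\aggreg$ produces the chain $\subaggreg\bp{\headelement_{\lambda}, \lambda \secondvalueelement_{1} + \np{1-\lambda}\secondvalueelement_{2}} = \aggreg\np{\headelement_{\lambda}, \tailelement_{\lambda}} \leq \lambda \aggreg\np{\headelement_{1}, \tailelement_{1}} + \np{1-\lambda}\aggreg\np{\headelement_{2}, \tailelement_{2}} = \lambda \subaggreg\np{\headelement_{1}, \secondvalueelement_{1}} + \np{1-\lambda}\subaggreg\np{\headelement_{2}, \secondvalueelement_{2}}$, which is precisely the joint convexity of $\subaggreg$ on $\HeadSet \times \image{\factor}$.

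The main obstacle is the reconciliation carried out in the third paragraph: one must produce a single tail whose factor equals the prescribed convex combination $\lambda \secondvalueelement_{1} + \np{1-\lambda}\secondvalueelement_{2}$ and which is simultaneously the matching convex combination of the chosen preimages $\tailelement_{1}, \tailelement_{2}$, so that convexity of $\aggreg$ can be applied. This is exactly what the hypotheses are engineered to deliver --- a convex lifting set $\bar{\TailSet}$ on which $\factor$ is affine and still surjects onto $\image{\factor}$ --- and it is where all three assumptions are consumed at once. A minor point to flag is that $\aggreg$ takes values in $\barRR$, so the convexity inequality is to be read in the extended reals with the usual conventions; this causes no genuine difficulty.
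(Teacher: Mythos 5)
Your proof is correct and follows essentially the same route as the paper: lift $\secondvalueelement_{1},\secondvalueelement_{2}$ into $\bar{\TailSet}$ via surjectivity of $\factor_{|\bar{\TailSet}}$ onto $\image{\factor}$, apply the Nested Formula~\eqref{equation_proof_weak_nested_decomposition} twice, and sandwich the joint convexity of $\aggreg$ in between, with affineness of $\factor_{|\bar{\TailSet}}$ supplying the convex combination in the second argument. The only cosmetic difference is that the paper states the inequality via the epigraph of $\subaggreg$ (which sidesteps arithmetic with $\pm\infty$), whereas you write the pointwise convexity inequality directly and flag the extended-real conventions --- both are fine.
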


Before entering the proof, let us stress the point that,
even if the assumption that the restricted function $\factor_{|\bar{\TailSet}}$
be affine 
may look strong, it is quite realistic and widespread.
Indeed, for example, if the factor mapping $\factor$ is the identity mapping
on~$\secondvalueset$, then it satisfies the conditions of Proposition~\ref{inheritance_convexity}:
Conditional expectation or Conditional Average Value at Risk are hence encompassed in this framework.

\begin{proof}
    We introduce the notation $\mathrm{epi}\np{M}$ to denote the epigraph
    \footnote{Let $\mathbb{X}$ be a set.
        The epigraph of the mapping $M : \mathbb{X} \to \RR \cup \na{-\infty, +\infty}$ is defined by 
        $\mathrm{epi}\np{M} = \ba{\np{x,y} \in \mathbb{X} \times \mathbb{R} : M\np{x} \leq y}$
        where $y$ is a real number.
    }  of a mapping $M$.
    We prove that the subaggregator $\subaggreg$ is jointly convex by showing that
    its epigraph is jointly convex.
    
    Let $\bp{\np{\headelement_{1}, \secondvalueelement_{1}}, \valueelement_{1}}$
    and $\bp{\np{\headelement_{2}, \secondvalueelement_{2}}, \valueelement_{2}}$
    be two elements of the epigraph $\mathrm{epi}\np{\subaggreg}$ of the subaggregator. 
    We consequently have
    \(
        \valueelement_{1} \geq \subaggreg\np{\headelement_{1}, \secondvalueelement_{1}}
        \textrm{ and }
        \valueelement_{2} \geq \subaggreg\np{\headelement_{2}, \secondvalueelement_{2}}
    \)
    which by addition to
    \begin{equation}\label{eq_epigraph}
        \lambda \valueelement_{1}
        +\np{1-\lambda}\valueelement_{2}
        \geq
        \lambda \subaggreg\np{\headelement_{1}, \secondvalueelement_{1}}
        +\np{1-\lambda} \subaggreg\np{\headelement_{2}, \secondvalueelement_{2}}
        \eqfinv
    \end{equation}
    where $\lambda$ is an element of $\nc{0,1}$. 
    As, by assumption, $\factor\np{\bar{\TailSet}} = \mathrm{Im}\np{\factor}$, 
    there exist two elements $\np{\bar\tailelement_{1}, \bar\tailelement_{2}} \in {\bar\TailSet}^{2}$
    such that 
    \begin{equation}\label{equation_proof_convexity}
        \factor\np{\bar\tailelement_{1}} = \secondvalueelement_{1}
        \textrm{ and }
        \factor\np{\bar\tailelement_{2}} = \secondvalueelement_{2} \eqfinp 
    \end{equation}

    We have the succession of equalities and inequality 
    \begin{align*}
        \lambda \valueelement_{1}
        +\np{1-\lambda}\valueelement_{2}
        &\geq
        \lambda \subaggreg\np{\headelement_{1}, \secondvalueelement_{1}}
        +\np{1-\lambda} \subaggreg\np{\headelement_{2}, \secondvalueelement_{2}}
        \eqfinv
        \tag{by Eq.~\eqref{eq_epigraph},}
        \\
        &=
        \lambda \subaggreg\bp{\headelement_{1}, \factor\np{\bar{\tailelement}_{1}}} + \np{1-\lambda}\subaggreg\bp{\headelement_{2}, \factor\np{\bar{\tailelement}_{2}}}
        \eqfinv
        \tag{by Eq.~\eqref{equation_proof_convexity},}
        \\
        &=
        \lambda \aggreg\np{\headelement_{1}, \bar{\tailelement}_{1}} + \np{1-\lambda}\aggreg\np{\headelement_{2}, \bar{\tailelement}_{2}}
        \eqfinv
        \tag{by Eq.~\eqref{equation_proof_weak_nested_decomposition},}
        \\
        &\geq
        \aggreg\bp{\lambda \headelement_{1} + \np{1-\lambda}\headelement_{2},\lambda \bar{\tailelement}_{1} + 
            \np{1-\lambda}\bar{\tailelement}_{2}}
        \eqfinv
        \tag{by convexity of $\aggreg$,}
        \\
        &=
        \subaggreg\Bp{\lambda \headelement_{1} + \np{1-\lambda}\headelement_{2},\factor\bp{\lambda \bar{\tailelement}_{1} + 
            \np{1-\lambda}\bar{\tailelement}_{2}}}
        \eqfinv
        \tag{by Eq.~\eqref{equation_proof_weak_nested_decomposition},}
        \\
        &=
        \subaggreg\bp{\lambda \headelement_{1} + \np{1-\lambda}\headelement_{2},\lambda \factor\np{\bar{\tailelement}_{1}} + 
        \np{1-\lambda}\factor\np{\bar{\tailelement}_{2}}}
        \eqfinv
        \tag{by affinity of $\factor$ on $\bar{\TailSet}$,}
        \\
        &=
        \subaggreg\bp{\lambda \headelement_{1} + \np{1-\lambda}\headelement_{2},\lambda \secondvalueelement_{1} + \np{1-\lambda}\secondvalueelement_{2}}
        \eqfinv
        \tag{by Eq.~\eqref{equation_proof_convexity}.}
    \end{align*}    
    We deduce that the element $\Bp{\bp{\lambda \headelement_{1} + \np{1-\lambda}\headelement_{2},\lambda \secondvalueelement_{1} + \np{1-\lambda}\secondvalueelement_{2}},
        \lambda \valueelement_{1}+\np{1-\lambda}\valueelement_{2}}$ is in the epigraph $\mathrm{epi}\np{\subaggreg}$ of the subaggregator.
This ends the proof.
\end{proof}


Notice that, if the factor $\factor$ is only convex, we cannot conclude in general. 
For example, let $\aggreg\np{\headelement,\tailelement} = \headelement + \tailelement$
be an aggregator and
let $\factor\np{\tailelement} = \exp\np{\tailelement}$
be a factor. 
Then the couple $\np{\aggreg, \factor}$ is Weak Time Consistent
with an associated subaggregator 
$\subaggreg\np{\headelement,\secondvalueelement} = \headelement + \ln\np{\secondvalueelement}$
which is not convex.

\subsubsection{Homogeneity}
As we are dealing with homogeneity property, 
we assume that the sets $\HeadSet, \TailSet, \valueset$ and $\secondvalueset$
in~\eqref{eq:factor_aggregator} 
are endowed with an external multiplication with the scalar field $\RR$.

\begin{prop}[Positive homogeneity]
    Let the couple $\np{\aggreg, \factor}$ be Weak Time Consistent,
    as in Definition~\ref{def_WTC}.
    If the mapping $\aggreg$ is jointly positively homogeneous
    and if the mapping $\factor$ is positively homogeneous,
    then the subaggregator $\subaggreg$ is jointly positively homogeneous.
\end{prop}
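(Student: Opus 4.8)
The plan is to leverage the Nested Formula~\eqref{equation_proof_weak_nested_decomposition}, which is available here because the couple $\np{\aggreg, \factor}$ is Weak Time Consistent, so that $\subaggreg$ is a genuine mapping by Theorem~\ref{WTC_theorem}. The whole argument reduces the positive homogeneity of the (a priori set-valued) object $\subaggreg$ to the assumed positive homogeneity of the two component mappings $\aggreg$ and $\factor$, by passing through a representative tail.

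First I would fix a scalar $\lambda > 0$ together with a point $\np{\headelement, \secondvalueelement} \in \HeadSet \times \image{\factor}$. Since $\secondvalueelement \in \image{\factor}$, I would select a representative $\tailelement \in \TailSet$ with $\factor\np{\tailelement} = \secondvalueelement$. The Nested Formula then gives $\subaggreg\np{\headelement, \secondvalueelement} = \aggreg\np{\headelement, \tailelement}$.

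The key step is to transfer the scaling through the factor. By positive homogeneity of $\factor$, one has $\factor\np{\lambda \tailelement} = \lambda \factor\np{\tailelement} = \lambda \secondvalueelement$, so that $\lambda \secondvalueelement \in \image{\factor}$ (ensuring $\subaggreg\np{\lambda\headelement, \lambda\secondvalueelement}$ is well defined) and $\lambda\tailelement$ is an admissible representative. Applying the Nested Formula at the scaled point and then joint positive homogeneity of $\aggreg$, I would obtain
\begin{equation*}
    \subaggreg\np{\lambda\headelement, \lambda\secondvalueelement}
    = \aggreg\np{\lambda\headelement, \lambda\tailelement}
    = \lambda \aggreg\np{\headelement, \tailelement}
    = \lambda \subaggreg\np{\headelement, \secondvalueelement}
    \eqfinp
\end{equation*}

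The only point requiring care is the domain bookkeeping: I must check that $\lambda\secondvalueelement$ lies in $\image{\factor}$ before even writing $\subaggreg\np{\lambda\headelement,\lambda\secondvalueelement}$, which is exactly what positive homogeneity of $\factor$ delivers. No difficulty arises from the particular choice of representative $\tailelement$, since the Nested Formula makes $\subaggreg$ single-valued regardless of which preimage is picked. As positive homogeneity concerns only scalars $\lambda > 0$, no separate treatment of $\lambda = 0$ is needed, and the displayed identity, valid for all such $\lambda$ and all $\np{\headelement,\secondvalueelement}$, is precisely the claimed joint positive homogeneity of $\subaggreg$.
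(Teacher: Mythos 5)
Your proof is correct and follows essentially the same route as the paper: both arguments chain the positive homogeneity of $\factor$, the Nested Formula~\eqref{equation_proof_weak_nested_decomposition} applied twice, and the joint positive homogeneity of $\aggreg$ to get $\subaggreg\np{\lambda\headelement, \lambda\secondvalueelement} = \lambda \subaggreg\np{\headelement, \secondvalueelement}$. Your explicit check that $\lambda\secondvalueelement \in \image{\factor}$, which the paper leaves implicit, is a welcome touch of rigor but not a substantive difference.
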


\begin{proof}
    Let $\np{\headelement, \tailelement}$ be element of $\HeadSet \times \TailSet$.
    Let $\lambda \in \RR^{+}$.
    We have the following equalities
    \begin{align*}
        \subaggreg\bp{\lambda \headelement, \lambda \factor\np{\tailelement}}
        &=
        \subaggreg\bp{\lambda \headelement, \factor\np{\lambda\tailelement}}
        \eqfinv
        \tag{by positive homogeneity of $\factor$}
        \\
        &=
        \aggreg\np{\lambda \headelement, \lambda\tailelement}
        \eqfinv
        \tag{by the Nested Formula~\eqref{equation_proof_weak_nested_decomposition}}
        \\
        &=\lambda\aggreg\np{\headelement, \tailelement}
        \eqfinv
        \tag{by positive homogeneity of $\aggreg$}
        \\
        &=\lambda\subaggreg\bp{\headelement, \factor\np{\tailelement}}
        \eqfinv
        \tag{by the Nested Formula~\eqref{equation_proof_weak_nested_decomposition}.}
    \end{align*}
    This ends the proof.
\end{proof}

\subsubsection{Translation invariance}
As we are dealing with translation invariance, we assume that the sets
$\HeadSet, \TailSet, \valueset$ and $\secondvalueset$ 
in~\eqref{eq:factor_aggregator} are endowed with
an addition~$+$. We also assume that there exists a set $\mathbb{I}$ of invariants
which is a common subspace of $\HeadSet, \TailSet, \valueset$ and
$\secondvalueset$, as follows. 

\begin{mydef}\label{def_TI}
    Let $\mathbb{X}$ and $\mathbb{Y}$ be sets equipped with an addition $+$.
    Let $\mathbb{I} \subset \mathbb{X} \cap \mathbb{Y}$ be a common subset of $\mathbb{X}$ and $\mathbb{Y}$.
    A mapping $M : \mathbb{X} \to \mathbb{Y}$ is said to be
    \emph{$\mathbb{I}$-translation invariant} 
if
    \begin{equation}
        M\np{x+i} = M\np{x}+i
         \eqsepv
        \forall x \in \mathbb{X}
        \eqsepv
        \forall i \in \mathbb{I}
        \eqfinp
    \end{equation}
\end{mydef}

\begin{prop}
    Let the couple $\np{\aggreg, \factor}$ be Weak Time Consistent,
    as in Definition~\ref{def_WTC}.
    If the mapping~$\aggreg$ is jointly translation invariant
    and if the mapping~$\factor$ is translation invariant
    then the subaggregator $\subaggreg$ is jointly translation invariant.
\end{prop}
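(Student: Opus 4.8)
The plan is to mirror the structure of the preceding positive-homogeneity proof, replacing the scalar multiplication by the diagonal translation by an invariant $i \in \mathbb{I}$. Since the couple $\np{\aggreg, \factor}$ is Weak Time Consistent, Theorem~\ref{WTC_theorem} guarantees that the subaggregator $\subaggreg$ is a genuine mapping and that the Nested Formula~\eqref{equation_proof_weak_nested_decomposition} holds true. I would therefore work entirely through the Nested Formula, using it to translate statements about $\subaggreg$ into statements about $\aggreg$, where the hypothesis of joint translation invariance can be applied directly.

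First I would fix an arbitrary $\np{\headelement, \tailelement} \in \HeadSet \times \TailSet$ and an invariant $i \in \mathbb{I}$, noting that because $\mathbb{I}$ is a common subspace of $\HeadSet$ and $\TailSet$, both $\headelement + i \in \HeadSet$ and $\tailelement + i \in \TailSet$, so that every expression below is well defined and the relevant second arguments remain inside $\image{\factor}$. Then I would establish the chain
\begin{align*}
    \subaggreg\bp{\headelement + i, \factor\np{\tailelement} + i}
    &=
    \subaggreg\bp{\headelement + i, \factor\np{\tailelement + i}}
    \eqsepv
    \tag{by translation invariance of $\factor$}
    \\
    &=
    \aggreg\np{\headelement + i, \tailelement + i}
    \eqsepv
    \tag{by the Nested Formula~\eqref{equation_proof_weak_nested_decomposition}}
    \\
    &=
    \aggreg\np{\headelement, \tailelement} + i
    \eqsepv
    \tag{by joint translation invariance of $\aggreg$}
    \\
    &=
    \subaggreg\bp{\headelement, \factor\np{\tailelement}} + i
    \eqfinp
    \tag{by the Nested Formula~\eqref{equation_proof_weak_nested_decomposition}}
\end{align*}

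Finally, since every $\secondvalueelement \in \image{\factor}$ can be written as $\secondvalueelement = \factor\np{\tailelement}$ for some $\tailelement \in \TailSet$, the chain above reads $\subaggreg\np{\headelement + i, \secondvalueelement + i} = \subaggreg\np{\headelement, \secondvalueelement} + i$ for all $\np{\headelement, \secondvalueelement} \in \HeadSet \times \image{\factor}$ and all $i \in \mathbb{I}$, which is exactly the joint $\mathbb{I}$-translation invariance of $\subaggreg$ in the sense of Definition~\ref{def_TI}. I do not expect a genuine obstacle here beyond the bookkeeping already present in the homogeneity proof; the one point deserving attention is to verify that the translated second argument $\factor\np{\tailelement} + i$ coincides with $\factor\np{\tailelement + i}$ and hence stays inside $\image{\factor}$, so that $\subaggreg$ is indeed evaluated on its domain. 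This is precisely the place where the translation invariance of $\factor$, rather than some weaker additivity property, is genuinely used.
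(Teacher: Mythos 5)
Your proof is correct and follows essentially the same route as the paper's: the identical four-step chain through the Nested Formula~\eqref{equation_proof_weak_nested_decomposition}, using translation invariance of $\factor$ to rewrite the second argument and joint translation invariance of $\aggreg$ in the middle. Your added remarks on domain bookkeeping (that $\factor\np{\tailelement}+i = \factor\np{\tailelement+i}$ keeps the second argument inside $\image{\factor}$) are a welcome precision the paper leaves implicit, but they do not change the argument.
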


\begin{proof}
     Let $\np{\headelement, \tailelement}$ be an element of $\HeadSet \times \TailSet$.
     Let $i \in \mathbb{I}$.
    We have the following equalities:
    \begin{align*}
        \subaggreg\bp{\headelement+i, \factor\np{\tailelement}+i}
        &=
        \subaggreg\bp{\headelement+i, \factor\np{\tailelement+i}}
        \eqfinv
        \tag{by translation invariance of $\factor$}
        \\
        &=
        \aggreg\np{\headelement+i, \tailelement+i}
        \eqfinv
        \tag{by the Nested Formula~\eqref{equation_proof_weak_nested_decomposition}}
        \\
        &=\aggreg\np{\headelement, \tailelement}+i
        \eqfinv
        \tag{by translation invariance of $\aggreg$}
        \\
        &=\subaggreg\bp{\headelement, \factor\np{\tailelement}}+i
        \eqfinv
        \tag{by the Nested Formula~\eqref{equation_proof_weak_nested_decomposition}.}
    \end{align*}
     We conclude that the subaggregator~$\subaggreg$ is 
    jointly translation invariant.
\end{proof}

\section{Revisiting the literature}
\label{section_application}

In Sect.~\ref{section_literature_review}, we have gone through a selection of
papers, touching
Time Consistency and Nested Formula in various settings. 
In Sect.~\ref{section_results}, we have formally stated our (abstract) definitions
of Time Consistency (TC) and Nested Formula (NF), 
and we have proven their equivalence. We have also provided conditions to obtain analytical properties of the mapping $\subaggreg$ appearing
in the Nested Formula, such as monotonicity, 
continuity, convexity, positive homogeneity and translation invariance.

Now, we return to the literature 
that we have briefly reviewed in~Sect.~\ref{section_literature_review},
and we show how our framework applies.
For this purpose, we go through each article and try to answer two questions.

First, what are the core assumptions that relate to our minimal notions of 
Time Consistency or Nested Formula? In particular, 
what are the heads and the tails and how are the Time Consistency axiom or
the Nested Formula formulated?
We will recover that the various definitions in the selection appear 
as special cases of ours.

Second, what are the assumptions that are additional to the core TC or NF formulations, 
and what do they imply for the subaggregator in the Nested Formula?
We will extract the additional assumptions specific to each author
and hence highlight their additional contribution.

\subsection{Axiomatic for Time Consistency (TC)}

We start our survey with the group of authors stating Time Consistency axiomatic. This group is subdivided between economists, 
who deal with lotteries and preferences, 
and probabilists, who deal with stochastic processes
and dynamical risk measures.
 
\subsubsection{Lotteries and preferences}

Kreps and Porteus (\cite*{Kreps-Porteus:1978}, \cite*{Kreps-Porteus:1979}) 
state a temporal consistency axiom (Axiom 2.1) in the first paper. In the second paper, they focus on 
the particular case of two stage problems. Their axiomatic is an instance of our
Definition~\ref{definition_usual_time_consistency}
of Usual Time Consistency.
With our Proposition~\ref{decomposition_time_consistent}, we directly deduce the existence of
a subaggregator increasing in its second argument and a Nested Formula,
whereas they obtain a stronger result under stronger assumptions. 
Indeed, they add assumptions of continuity, 
substitution (related to convexity) and focus on Usual Time Consistency with strict 
inequalities.
This enables them to obtain a subaggregator 
which is continuous and strictly increasing in its second argument and is defined by
((Lemma 4, Theorem 2) and Proposition 1 respectively):
\( 
    u_{y_{t}}:
    \ba{
        \np{z,\gamma} \in Z_{t}\times \RR:
        \gamma = U_{y_{t},z}\np{x}
        \textrm{ for some }
        x \in \XX_{t+1}
    }
    \to
    \RR
\). 
\medskip

Epstein and Schneider \cite*{Epstein-Schneider:2003} state an axiom of 
Dynamic Consistency (Axiom 4: DC) which is a particular case of our 
Definition~\ref{definition_usual_time_consistency} of Usual Time Consistency.
With our Proposition~\ref{decomposition_time_consistent}, we directly deduce the existence of
a subaggregator increasing in its second argument and a Nested Formula,
whereas they obtain a stronger result under stronger assumptions. 
Indeed, they introduce four additional axioms --- Conditional Preferences 
(CP), Multiple Priors (MP), Risk Preference (RP) and Full 
Support (FS) --- that ensure a particular form of the subaggregator. 
MP and CP ensure that the subaggregator can be represented as a minimum of expectation over a rectangular set 
of probabilities which is closed and convex. MP and RP ensure that the criterion
is additive over time. 
FS ensures that the probability measures have full support.
Epstein and Schneider obtain the 
following Nested Formula\footnote{
The equation 
is the original transcription of the
formula in~\cite*{Epstein-Schneider:2003}, to which we refer the reader for a
better understanding. By laying it out, 
we only want to stress the Nested Formula between $V_{t}$ and $V_{t+1}$.
} associated to Time Consistency (Theorem 3.2):
\( 
    V_{t}\np{h,\omega} 
    =
    \min_{m \in \cP_{t}^{+1}\np{\omega}}
    \int
    \Bc{
        u\bp{h_{t}\np{\omega}}
        +
        \beta
        V_{t+1}\np{h}
    }dm
\). 

\subsubsection{Dynamic risk measures and processes}

Ruszczy\'nsky studies \cite*{ruszczynski2010risk} dynamic risk measures $\na{\rho_{s,T}}_{s=1}^{T}$. 
Time Consistency (his Definition~3),
appears as a particular case of our Usual Time Consistency 
Definition~\ref{definition_usual_time_consistency}.
With our Proposition~\ref{decomposition_time_consistent}, we directly deduce the existence of
a subaggregator increasing in its second argument and a Nested Formula,
whereas Ruszczy\'nsky obtains a stronger result under stronger assumptions. 
Indeed, he adds assumptions that induce a particular form for the subaggregator.
From a conditional risk measure $\rho_{s,T}$, he defines 
mappings $\rho_{s,s'}$ with $s \leq s' \leq T$.
With our notations for aggregator~$\aggreg$ and factor~$\factor$, he
then focuses on the case where the initial assessment is $\aggreg = \rho_{s,T}$ 
and the future assessment is $\factor = \rho_{s',T}$. 
With two additional assumptions of invariance 
by translation and normalization ($\rho_{s,T}\np{0} = 0)$, 
Ruszczy\'nsky is able to state that 
the subaggregator has the specific form (Theorem 1):
\( 
    \subaggreg = \rho_{s,s'}
\). 
\medskip

In~\cite*{ADEH-coherent:2007},
Artzner, Delbean, Eber, Heath and Ku present Time Consistency (their Definition 4.1) 
which appears as an instance of our Definition~\ref{definition_usual_time_consistency}
of Usual Time Consistency. 
With our Proposition~\ref{decomposition_time_consistent}, we directly deduce the existence of
a subaggregator increasing in its second argument and a Nested Formula,
whereas they obtain a stronger result under stronger assumptions. 
Indeed, they  study particular mappings of the form $ \Psi_{t} = \sup_{\prbt \in \cP}  
\nespc{\prbt}{\cdot}{\tribu{F}_{t}}$, 
where $\cP$ is a subset of probabilities and $\np{\tribu{F}_{t}}_{t = 0}^{T}$ is a filtration.
They make an intermediary step before presenting a Nested Formula. 
They use a tool that they name stability by pasting (rectangularity) 
of the set~ $\cP$ of probability distributions.
With our notations for aggregator~$\aggreg$ and factor~$\factor$, 
this enables them to obtain, for $s\leq s'$, that if $\aggreg=\Psi_{s}$ and $\factor = \Psi_{s'}$ 
then the subaggregator has the specific form (Theorem 4.2):
\( 
    \subaggreg\np{\headelement, \cdot} = \Psi_{s}\np{h + \cdot}
\). 

\subsection{Axiomatic for Nested formulas (NF)}

Shapiro and Ruszczy\'nski \cite*{Ruszczynski-Shapiro:2006}  study 
a family of conditional risk mapping
$\rho_{t} = \rho_{\cX_{2}\mid\cX_{1}} 
\circ \cdots \circ \rho_{\cX_{t}\mid\cX_{t-1}}$  (Equation (5.8)).
Each $\rho_{t}$ is increasing and is associated with a $\sigma$-algebra $\tribu{F}_{t}$,
where $\np{\tribu{F}_{t}}_{t=2}^{T}$ is a filtration.
As these mappings~$\rho_{t}$ are instances of the mappings in our 
Nested Formula~\eqref{equation_proof_weak_nested_decomposition},
they are Usual Time Consistent,
by using our Proposition~\ref{decomposition_time_consistent}.
With our notations for aggregator~$\aggreg$ and factor~$\factor$, 
and with additional assumptions of monotonicity, translation invariance, convexity and homogeneity, 
Shapiro and Ruszczy\'nski obtain that, if the initial assessment is $\aggreg = \rho_{t}$ and the future 
assessment is $\factor = \rho_{t+1}$, then the subaggregator is (Theorem 5.1)
\( 
    \subaggreg = \rho_{\cX_{t+1}\mid\cX_{t}}
 \). 
\medskip

Shapiro (\cite*{Shapiro:2016}) focuses on a future assessment 
and on a subaggregator of the form (Definition 2.1)
\( 
    \factor = \sup_{\prbt \in \cP} \bespc{\prbt}{\cdots    \sup_{\prbt \in \cP}  \nespc{\prbt}{\cdot}{\tribu{F}_{T-1}}    }{\tribu{F}_{0}}
    \eqsepv
    \subaggreg = \sup_{\prbt \in \cP} \nesp{\prbt}{\cdot}
 \). 
With our notations for aggregator~$\aggreg$ and factor~$\factor$, 
this Nested Formula is an instance of our Nested Formula~\eqref{equation_proof_weak_nested_decomposition}.
We can define a natural initial assessment which is Usual Time Consistent with
the future assessment, by using our Proposition~\ref{decomposition_time_consistent}.
With additional assumptions of finiteness, Shapiro obtains that there exists 
a bounded set~$\widehat{\cP}$ of probability distributions such that 
the initial assessment has the specific form (Theorem 2.1)
\( 
    \aggreg = 
    \sup_{\prbt \in \widehat{\cP}} 
    \nesp{\prbt}{\cdot}   
 \). 
Besides, with additional assumption (Theorem 2.2) that $\cP$ is convex, 
bounded and weakly closed, Shapiro establishes that $\cP = \hat{\cP}$.
\medskip

De Lara and Leclère \cite*{DeLara-Leclere:2016} study composition of one time step aggregators. 
They make a distinction between uncertainty aggregator and time step aggregator, 
and they write a Nested Formula (Equation~(11)) 
which is an instance of our Formula~\eqref{equation_proof_weak_nested_decomposition}.
We can naturally define an initial assessment from this composition operation
which is time consistent with the one time step aggregator, by using our 
Proposition~\ref{decomposition_time_consistent}.
They add an additional 
hypothesis of monotonicity and one of commutation between uncertainty aggregator and time aggregator. 
They deduce that the initial assessment can be defined as the composition 
between a one time step aggregator (subaggregator) and a future assessment (Theorem~9).

\section{Two classes of time consistent mappings}
\label{Two_classes_of_time_consistent_mappings}

In this section, we present two classes of mappings
that display time consistency under suitable assumptions. 
%
%
We study in~\sect{section_IT_group_bis} translation invariant mappings
motivated by the representation of risk measures in terms of acceptance set.
Then, in~\sect{section_TC_rectangular}, 
we study mappings that are defined as Fenchel-Moreau conjugates
motivated by the dual reformulation of convex risk measures.

\subsection{Time consistent translation invariant mappings}
\label{section_IT_group_bis}

We study translation invariant mappings defined on ordered groups. 
We associate to each such mapping
an acceptance set which is the level set of level 0.
We prove that time consistency between two translation invariant mappings
is equivalent to an inclusion between acceptance sets.

\subsubsection{Translation invariant mappings on a group}
\label{subsection_IT_group}

We provide here the definition of a translation invariant mapping
and the one of an acceptance set. 
With these notions, we will
state our contribution.
We first recall the definition of an ordered group.
\begin{mydef}
    The triplet $\np{\secondvalueset, \oplus, \leq}$ 
    is said to be an \emph{ordered group}
    if $\secondvalueset$ is a set,
    $\np{\secondvalueset, \oplus}$ is a group,
    $\np{\secondvalueset, \leq}$ is an ordered set,
    and the order $\leq$ is compatible with $\oplus$, i.e.
        \begin{equation}
            \secondvalueelement_{1} \leq \secondvalueelement_{2} 
            \Rightarrow
            \secondvalueelement_{1} \oplus \secondvalueelement_{3} 
            \leq
            \secondvalueelement_{2} \oplus \secondvalueelement_{3} 
            \eqsepv
            \forall 
            \np{\secondvalueelement_{1}, \secondvalueelement_{2}, \secondvalueelement_{3}}
            \in \secondvalueset^{3}
            \eqfinp
        \end{equation}
\end{mydef}

We now provide the definition of translation invariant mappings on a group.
\begin{mydef}\label{def_IT_RM}
    Let $\np{\TailSet, \oplus}$ be a commutative group
    and $\np{\secondvalueset, \oplus}$ be a subgroup of $\np{\TailSet, \oplus}$,
    that we denote by
    \begin{equation}
        \np{\secondvalueset, \oplus}
        \subset
        \np{\TailSet, \oplus}
        \eqfinp
    \end{equation}
    A \emph{$\np{\TailSet, \secondvalueset}$-translation invariant mapping}
    is a mapping $\factor : \TailSet \to \secondvalueset$ that satisfies
    \begin{equation}\label{eq_IT_ominus}
        \factor\np{\tailelement \oplus \secondvalueelement} 
        = 
        \factor\np{\tailelement} \oplus  \secondvalueelement
        \eqsepv \forall \tailelement \in \TailSet
        \eqsepv
        \forall \secondvalueelement \in \secondvalueset
        \eqfinp
    \end{equation}
    \label{translation-invariant}
    
    In addition, if $\np{\factor, \oplus, \leq}$ is an ordered group, 
    we introduce the notations $\cA_{\factor}$ and $\cA_{\factor \mid \secondvalueset}$
    to deal with particular level sets of the  $\np{\TailSet, \secondvalueset}$-translation invariant mapping $\factor: \TailSet \to \secondvalueset$:
    \begin{subequations}
        \begin{align}
            \cA_{\factor} 
            &= 
            \ba{\tailelement \in \TailSet \mid \factor\np{\tailelement} \leq 0}
            \eqfinv\label{eq_def_acceptance_set}
            \\
            \cA_{\factor_{\mid \secondvalueset}}
            &=
            \ba{\secondvalueelement \in \secondvalueset \mid \factor\np{\secondvalueelement} \leq 0}
            =
            \cA_{\factor} \cap \secondvalueset
            \eqfinp
            \label{eq_acceptance_set_subaggreg}
    \end{align}
    \end{subequations}
\end{mydef}

\subsubsection{Characterization of UTC in terms of acceptance sets}
Given two translation invariant mappings $\factor$ and $\mesu$
as in Definition~\ref{def_IT_RM}, 
we will build an aggregator $\aggreg_{\mesu}$ such that
the couple $\np{\aggreg_{\mesu}, \factor}$ is time consistent
as in Definition~\ref{definition_usual_time_consistency}.
\begin{figure}[ht]
    \centering
    \begin{tikzpicture}[->,>=stealth',shorten >=1pt,auto,node distance=2cm, semithick]
        \tikzstyle{every state}=[draw=none,text=black]

        \node[state]    (A)                                 {$\HeadSet$};
        
        \node[state]    (AB)  [right of=A,  xshift = -1.5cm]                   {$\times$};
        \node[state]    (B)  [right of=AB, xshift = -1.5cm]                   {$\TailSet$};

        \node[state]    (C)  [right of=B]                   {$\valueset$};
        \node[state]    (D)  [below of=A]                   {$\HeadSet$};
        \node[state]    (E)  [below of=AB]                   {$\times$};
        \node[state]    (F)  [below of=B]                   {$\secondvalueset$};
        
        \path   
        (B)     edge  [above]   node {$\aggreg_{\mesu}$}  (C)
                (B.090)     edge  [bend left=45]   node {$\mesu$}  (C.090)
                (B)     edge  [left]   node {$\factor$}  (F)
                (F)     edge  [right]  node {$\subaggreg$}  (C);
       
    \end{tikzpicture}
    \caption{Representation of links between mappings of Proposition~\ref{prop_sum_acceptance_set}}
\end{figure}
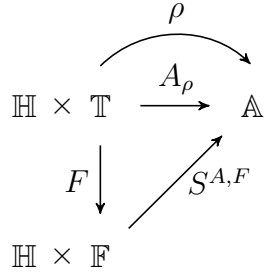

The next proposition is a generalization, with our notations, 
of Lemma 11.14 and Proposition 11.15 of \citet*{follmer2011stochastic},
since we do not refer to risk measures on probability spaces
but to more general sets.

\begin{prop}\label{prop_sum_acceptance_set}
    Let $\np{\TailSet, \oplus}$ be a commutative group.
    Given two subgroups
    \begin{equation}
        \np{\HeadSet, \oplus}
        \subset
        \np{\TailSet, \oplus}
        \textrm{ and }
        \np{\valueset, \oplus}
        \subset
        \np{\TailSet, \oplus}
        \eqfinv
    \end{equation}
    and a $\np{\TailSet, \valueset}$-translation invariant mapping $\mesu: \TailSet \to \valueset$,
    we define the mapping $\aggreg_{\mesu}: \HeadSet \times \TailSet \to \valueset$ by
    \begin{align}\label{eq_aggreg_mesu}
        \aggreg_{\mesu} : \HeadSet \times \TailSet &\to \np{\valueset, \oplus, \leq} \eqfinv\\
        \np{\headelement, \tailelement} &\mapsto \mesu\np{\headelement \oplus \tailelement}
        \eqfinp
    \end{align}
    Let $\factor: \TailSet \to \np{\secondvalueset, \oplus, \leq}$ 
    be a $\np{\TailSet, \secondvalueset}$-translation invariant mapping.
     If we have that
    \begin{itemize}
        \item $\np{\HeadSet, \oplus}
        \subset
        \np{\valueset, \oplus}
        \subset
        \np{\secondvalueset, \oplus}
        \subset
        \np{\TailSet, \oplus}$,
        \item the $\np{\TailSet,\valueset}$-translation invariant mapping $\mesu: \TailSet \to \valueset$ is increasing,
        \item the $\np{\TailSet,\secondvalueset}$-translation invariant mapping $\factor: \TailSet \to \secondvalueset$ satisfies $\factor\np{0} = 0$
        (where $0$ is the neutral element of $\np{\TailSet, \oplus}$),
    \end{itemize}
    then the couple of mappings $\np{\aggreg_{\mesu},\factor}$ is Usual Time Consistent 
    if and only if
    \begin{equation}\label{eq_eq_acc_set}
        \cA_{\factor} \oplus \cA_{\mesu_{\mid \secondvalueset}}
        =
        \cA_{\mesu}
        \eqfinv
    \end{equation}
    where $\cA_{\factor}$, $\cA_{\mesu \mid \secondvalueset}$ and $\cA_{\mesu}$
    are defined in~\eqref{eq_def_acceptance_set} and~\eqref{eq_acceptance_set_subaggreg}.
\end{prop}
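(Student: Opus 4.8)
The plan is to route everything through a \emph{head-free} reformulation of Usual Time Consistency and to translate membership in the acceptance sets into inequalities by means of the defining translation invariance~\eqref{eq_IT_ominus}; throughout I write $\ominus\secondvalueelement$ for the $\oplus$-inverse. First I would record two preliminary facts. Since $\factor$ is $\np{\TailSet,\secondvalueset}$-translation invariant and $\factor\np{0}=0$, for every $\secondvalueelement\in\secondvalueset$ one gets $\factor\np{\secondvalueelement}=\factor\np{0\oplus\secondvalueelement}=\secondvalueelement$, so $\factor$ restricts to the identity on $\secondvalueset$. Combining~\eqref{eq_IT_ominus} with the compatibility of $\leq$ and $\oplus$ in the ordered group yields the dictionary $\factor\np{\tailelement}\leq\secondvalueelement\Leftrightarrow\tailelement\ominus\secondvalueelement\in\cA_{\factor}$ for $\secondvalueelement\in\secondvalueset$, and likewise $\mesu\np{\tailelement}\leq\valueelement\Leftrightarrow\tailelement\ominus\valueelement\in\cA_{\mesu}$ for $\valueelement\in\valueset$.

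Next I would reduce the definition~\eqref{time_consistency_representation} of Usual Time Consistency to its $\headelement=0$ instance. Because $\np{\HeadSet,\oplus}\subset\np{\secondvalueset,\oplus}$, every $\headelement\in\HeadSet$ lies in $\secondvalueset$, so translation invariance gives $\factor\np{\headelement\oplus\tailelement}=\factor\np{\tailelement}\oplus\headelement$, whence $\factor\np{\tailelement}\leq\factor\np{\tailelement'}\Leftrightarrow\factor\np{\headelement\oplus\tailelement}\leq\factor\np{\headelement\oplus\tailelement'}$. Since $\aggreg_{\mesu}\np{\headelement,\tailelement}=\mesu\np{\headelement\oplus\tailelement}$, applying the $\headelement=0$ case to the pair $\np{\headelement\oplus\tailelement,\headelement\oplus\tailelement'}$ recovers the general case, so $\np{\aggreg_{\mesu},\factor}$ is Usual Time Consistent (Definition~\ref{definition_usual_time_consistency}) if and only if
\begin{equation*}
  \factor\np{\tailelement}\leq\factor\np{\tailelement'}
  \Rightarrow
  \mesu\np{\tailelement}\leq\mesu\np{\tailelement'}
  \eqsepv
  \forall\np{\tailelement,\tailelement'}\in\TailSet^{2}
  \tag{C}
\end{equation*}
holds. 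This reduction is exactly where the inclusion $\HeadSet\subset\secondvalueset$ is used.

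It then remains to prove that (C) is equivalent to the acceptance-set identity~\eqref{eq_eq_acc_set}. For (C)~$\Rightarrow$~\eqref{eq_eq_acc_set} I would prove two inclusions. Given $\tailelement\in\cA_{\factor}$ and $\secondvalueelement\in\cA_{\mesu_{\mid\secondvalueset}}$, translation invariance gives $\factor\np{\tailelement\oplus\secondvalueelement}=\factor\np{\tailelement}\oplus\secondvalueelement\leq\secondvalueelement=\factor\np{\secondvalueelement}$, so (C) yields $\mesu\np{\tailelement\oplus\secondvalueelement}\leq\mesu\np{\secondvalueelement}\leq0$, i.e. $\cA_{\factor}\oplus\cA_{\mesu_{\mid\secondvalueset}}\subset\cA_{\mesu}$. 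For the reverse inclusion I would use the explicit splitting $\tailelement=\np{\tailelement\ominus\factor\np{\tailelement}}\oplus\factor\np{\tailelement}$: the first summand lies in $\cA_{\factor}$ because $\factor\np{\tailelement\ominus\factor\np{\tailelement}}=\factor\np{\tailelement}\ominus\factor\np{\tailelement}=0\leq0$, while $\factor\np{\tailelement}\in\secondvalueset$ satisfies $\mesu\np{\factor\np{\tailelement}}\leq\mesu\np{\tailelement}\leq0$ by (C) applied to the pair $\np{\factor\np{\tailelement},\tailelement}$ (both share image $\factor\np{\tailelement}$), so it lies in $\cA_{\mesu_{\mid\secondvalueset}}$; hence $\cA_{\mesu}\subset\cA_{\factor}\oplus\cA_{\mesu_{\mid\secondvalueset}}$.

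For the converse~\eqref{eq_eq_acc_set}~$\Rightarrow$~(C), I would first rephrase~\eqref{eq_eq_acc_set}, via the dictionary applied to a decomposition $\tailelement=\tailelement_{1}\oplus\secondvalueelement$, as the characterization $\mesu\np{\tailelement}\leq0\Leftrightarrow\exists\secondvalueelement\in\secondvalueset,\ \factor\np{\tailelement}\leq\secondvalueelement\textrm{ and }\mesu\np{\secondvalueelement}\leq0$. Then, assuming $\factor\np{\tailelement}\leq\factor\np{\tailelement'}$ and setting $\valueelement=\mesu\np{\tailelement'}\in\valueset$, the element $\tailelement'\ominus\valueelement$ is $\mesu$-acceptable, so the characterization produces a witness $\secondvalueelement_{0}\in\secondvalueset$ with $\factor\np{\tailelement'}\ominus\valueelement\leq\secondvalueelement_{0}$ and $\mesu\np{\secondvalueelement_{0}}\leq0$; since $\factor\np{\tailelement}\ominus\valueelement\leq\factor\np{\tailelement'}\ominus\valueelement\leq\secondvalueelement_{0}$, the same witness certifies that $\tailelement\ominus\valueelement$ is $\mesu$-acceptable, giving $\mesu\np{\tailelement}\leq\valueelement=\mesu\np{\tailelement'}$, which is (C). The main obstacle is precisely this witness-transfer step together with the explicit splitting used for the reverse inclusion: both hinge on translation invariance to pass freely between the value of a mapping and membership in its acceptance set, and on the chain $\np{\valueset,\oplus}\subset\np{\secondvalueset,\oplus}$ so that $\valueelement=\mesu\np{\tailelement'}$ may be subtracted inside $\secondvalueset$.
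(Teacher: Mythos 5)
Your proof is correct, but it takes a genuinely different route from the paper's own. The paper's proof runs through its central nested-formula machinery: it invokes Proposition~\ref{decomposition_time_consistent} to trade Usual Time Consistency for the subaggregator $\composite^{\aggreg_{\mesu},\factor}$ being a mapping increasing in its second argument, uses translation invariance and $\factor\np{0}=0$ to compute the subaggregator explicitly (Equation~\eqref{eq_sub_IT}) and collapse the nested formula to the functional identity $\mesu\circ\factor=\mesu$, and separately shows that each inclusion between $\cA_{\mesu}$ and $\cA_{\factor}\oplus\cA_{\mesu_{\mid\secondvalueset}}$ is equivalent to one of the pointwise inequalities $\mesu\bp{\factor\np{\tailelement}}\leq\mesu\np{\tailelement}$ and $\mesu\bp{\factor\np{\tailelement}}\geq\mesu\np{\tailelement}$, after first establishing the key equivalence $\tailelement\in\cA_{\factor}\oplus\cA_{\mesu_{\mid\secondvalueset}}\Leftrightarrow\factor\np{\tailelement}\in\cA_{\mesu_{\mid\secondvalueset}}$, whose forward direction explicitly appeals to the monotonicity of $\mesu$. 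You bypass the subaggregator entirely: your reduction of UTC to the head-free condition (C) is legitimate (the instance $\headelement=0$ is available because $\HeadSet$ is a subgroup, and the converse uses only $\HeadSet\subset\secondvalueset$, translation invariance of $\factor$, and order compatibility), and your dictionary $\mesu\np{\tailelement}\leq\valueelement\Leftrightarrow\tailelement\ominus\valueelement\in\cA_{\mesu}$ together with the witness-transfer argument then proves (C)$\Leftrightarrow$\eqref{eq_eq_acc_set} directly. A noteworthy by-product: your argument never uses the hypothesis that $\mesu$ is increasing — condition (C) itself supplies every monotonicity-type implication where the paper's proof invokes the assumption (e.g.\ your application of (C) to the pair $\np{\tailelement\oplus\secondvalueelement,\secondvalueelement}$ replaces the paper's ``by monotonicity of $\mesu$'' step) — so your proof is more elementary and in fact establishes a slightly stronger statement, valid without the second bullet hypothesis. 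What the paper's route buys in exchange is the exhibition of the result as an application of its general characterization of UTC, plus the explicit intermediate identity $\mesu\circ\factor=\mesu$ and the closed form of the subaggregator, which your head-free reduction keeps implicit.
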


\begin{proof}
    We refer the reader to Appendix~\ref{Appendix} for the proof.
\end{proof}

Equation~\eqref{eq_eq_acc_set}
establishes a nice relation between acceptance sets 
of the original mapping $\rho$ and the ``conditional'' mapping $F$.
However, it remains difficult to solve when the variables
are the mappings $\mesu: \TailSet \to \valueset$ and $\factor: \TailSet \to \secondvalueset$ 
given in Proposition~\ref{prop_sum_acceptance_set} since it is an implicit equation in $\mesr$.

\subsection{Time consistent convex mappings}
\label{section_TC_rectangular}

Here, we focus on time consistency for mappings that
are defined as Fenchel-Moreau conjugates.
We are motivated by results on dual representation of convex risk mappings 
\citep*{ADEH-coherent:1999}. 

We first recall Fenchel-Moreau conjugacy with general couplings
(not necessarily the classic duality pairing).
Then, we state our main theorem that provide a nested formula
and hence time consistency of mappings defined as Fenchel-Moreau conjugates.

\subsubsection{Basic tools to deal with Fenchel-Moreau conjugacies}

The formal tools of couplings and Fenchel-Moreau conjugates 
were introduced in the seminar paper of~\citet*{inf63sous}.
We recall that
$\overline{\RR} = \barRR = \RR \cup \na{-\infty, +\infty}$.

When we manipulate functions with values 
in~$\overline{\RR}$,
we adopt the Moreau \emph{lower addition} or
\emph{upper addition} defined in Equations~\eqref{eq:lower_addition} and~\eqref{eq:upper_addition}, 
depending on whether we deal with $\sup$ or $\inf$ operations. 
We only recall useful definitions to make the article self-contained.
In the sequel, $u$, $v$ and $w$ are any elements of~$\bar\RR$.
\medskip

The Moreau \emph{lower addition} and \emph{upper addition} extend the usual addition with 
\begin{subequations}
    \begin{equation}
        \np{+\infty} \LowPlus \np{-\infty} 
        = 
        \np{-\infty} \LowPlus \np{+\infty} = -\infty 
        \eqfinv
        \label{eq:lower_addition}
    \end{equation}
    \begin{equation}
        \np{+\infty} \UppPlus \np{-\infty} = 
        \np{-\infty} \UppPlus \np{+\infty} = +\infty \eqfinp
        \label{eq:upper_addition}
    \end{equation}
\end{subequations}
and they display the following properties:
\begin{subequations}
    \begin{equation}
        -(u \UppPlus v) = (-u) \LowPlus (-v) \eqsepv 
        -(u \LowPlus v) = (-u) \UppPlus (-v) \eqfinp 
        \label{eq:lower_upper_addition_minus}  
        \end{equation}
    \begin{equation}
        \sup_{a \in \mathbb{A}} f(a) 
        \LowPlus 
        \sup_{b \in \BB} g(b)
            =
        \sup_{a \in \mathbb{A}, b \in \BB} 
        \bp{f(a) \LowPlus g(b)} 
        \eqfinv
        \label{eq:lower_addition_sup}
    \end{equation}
\end{subequations}

\paragraph{Background on Fenchel-Moreau conjugacy with respect to a coupling.}
Let be given two sets $\PRIMAL$ and $\DUAL$.
Consider a \emph{coupling} function 
\( \coupling : \PRIMAL \times \DUAL \to \barRR \).
We also use the notation \( \PRIMAL 
\overset{\coupling}{\leftrightarrow} \DUAL \) for a coupling, so that
\begin{equation}\label{eq_coupling}
  \PRIMAL \overset{\coupling}{\leftrightarrow} \DUAL 
  \iff
  \coupling : \PRIMAL \times \DUAL \to \barRR \eqfinp
\end{equation}

\begin{mydef}\label{def_FM_conjugate}
  The \emph{Fenchel-Moreau conjugate} of a 
  function \( \fonctionprimal : \PRIMAL  \to \barRR \), 
  with respect to the coupling~$\coupling$ in~\eqref{eq_coupling}, is
  the function \( \SFM{\fonctionprimal}{\coupling} : \DUAL  \to \barRR \)
  defined by
  \begin{equation}
    \SFM{\fonctionprimal}{\coupling}\np{\dual} = 
    \sup_{\primal \in \PRIMAL} \Bp{ \coupling\np{\primal,\dual} 
      \LowPlus \bp{ -\fonctionprimal\np{\primal} } } 
\eqsepv \forall \dual \in \DUAL
\eqfinp
    \label{eq:upper-Fenchel-Moreau_conjugate}
  \end{equation}
\end{mydef}

\subsubsection{Main result: nested formula for Fenchel-Moreau conjugates}
\label{subsection_abstract_rectangular}

We provide a nested formula between mappings
defined as Fenchel-Moreau conjugates.
We introduce the notion of decomposable coupling.
\begin{mydef}\label{def_rectangle_abstract}
    Let $\XX$, $\YY$, $\ZZ$ and $\YY'$ be four sets
    and let $\theta_{\XX \times \ZZ}$, $\theta_{\ZZ}$ and $\theta_{\XX}$ 
    be three mappings with values in $\YY'$
    \begin{subequations}
    \begin{align}
        \theta_{\XX \times \ZZ}&: \XX \times \ZZ \to \YY'
        \eqfinv
        \\
        \theta_{\ZZ}&: \ZZ \to \YY'
        \eqfinv
        \\
        \theta_{\XX} &: \XX \to \YY'
        \eqfinp
    \end{align}
    \end{subequations}
    Let $\varphi: \YY' \times \YY \to \barRR$ be a coupling
    between $\YY'$ and $\YY$.
    
    We say that the coupling $\varphi$ is \emph{$\np{\theta_{\XX \times \ZZ}, \theta_{\XX}, \theta_{\ZZ}}$-decomposable} if
    \begin{align}
            \varphi\bp{\theta_{\XX}\np{x},y}
            &=
            \sup_{z \in \ZZ}
            \Ba{\varphi\bp{\theta_{\XX \times \ZZ}\np{x,z},y} \plusdot \Bp{-\varphi\bp{\theta_{\ZZ}\np{z},y}}}
            \eqfinv
            \label{hyp_item_3}
            \\
            &\phantom{aaaaaaaaaaaaaaaaaaaaaaaaa}\forall \np{x,y} \in \XX \times \YY
            \eqfinp
            \nonumber
    \end{align}
\end{mydef}

Here is our result that provides nested formula for Fenchel-Moreau conjugates.
\begin{prop}\label{prop_rect_to_nested_generic}
    Let $\XX$, $\YY$, $\ZZ$ and $\YY'$ be four sets
    and $g: \YY \to \barRR$ be a numerical function.
    Let $\varphi: \YY' \times \YY \to \barRR$ be 
    $\np{\theta_{\XX \times \ZZ}, \theta_{\XX}, \theta_{\ZZ}}$-decomposable 
    as in Definition~\ref{def_rectangle_abstract}.
    
    Let us define the coupling $\Phi: \XX \times \np{\YY \times \ZZ} \to \barRR$ by
    \begin{equation}
        \Phi\bp{x, \np{y, z}}
        =
        \varphi\bp{\theta_{\XX \times \ZZ}\np{x,z},y}
        \eqsepv
        \forall \np{x,y,z} \in \XX \times \YY \times \ZZ
        \eqfinv
        \label{hyp_item_2}
    \end{equation}
    and the function $G: \YY \times \ZZ \to \barRR$ by
    \begin{align}
        G\np{y,z}
        =
        g\np{y} \dotplus \varphi\bp{\theta_{\ZZ}\np{z},y}
        \eqsepv \forall \np{y,z} \in \YY \times \ZZ
        \eqfinp
        \label{hyp_item_2_bis}
    \end{align}
    Then, we have the following Nested Formula between Fenchel-Moreau conjugates: 
    \begin{equation}\label{eq_NF_rect}
        G^{\Phi} = g^{\varphi} \circ \theta_{\XX}
        \eqfinp
    \end{equation}
\end{prop}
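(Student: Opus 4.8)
The plan is to establish~\eqref{eq_NF_rect} by a direct computation: I unfold the Fenchel-Moreau conjugate $\SFM{G}{\Phi}$ using Definition~\ref{def_FM_conjugate}, substitute the definitions of $\Phi$ and $G$, and then invoke the decomposability hypothesis~\eqref{hyp_item_3} to recover $\SFM{g}{\varphi} \circ \theta_{\XX}$. Fixing $x \in \XX$ and taking the supremum over the variable $\np{y,z} \in \YY \times \ZZ$ on which $G$ depends, I would first write
\[
    \SFM{G}{\Phi}\np{x}
    =
    \sup_{\np{y,z} \in \YY \times \ZZ}
    \Bp{\Phi\bp{x,\np{y,z}} \LowPlus \bp{-G\np{y,z}}}
    \eqfinp
\]
Substituting the expression~\eqref{hyp_item_2} for $\Phi$ and the expression~\eqref{hyp_item_2_bis} for $G$, and applying the sign rule~\eqref{eq:lower_upper_addition_minus} in the form $-\bp{u \UppPlus v} = \bp{-u} \LowPlus \bp{-v}$ to the upper addition defining $G$, the term under the supremum becomes a lower sum of three pieces combined with $\LowPlus$, namely $\varphi\bp{\theta_{\XX \times \ZZ}\np{x,z},y}$, $-g\np{y}$, and $-\varphi\bp{\theta_{\ZZ}\np{z},y}$.

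Next I would split the supremum as $\sup_{y}\sup_{z}$. For fixed $y$ the term $-g\np{y}$ is constant in $z$, so the distributivity of lower addition over suprema~\eqref{eq:lower_addition_sup} (with the $z$-independent factor pulled out) isolates the inner supremum
\[
    \sup_{z \in \ZZ}
    \Ba{\varphi\bp{\theta_{\XX \times \ZZ}\np{x,z},y}
    \LowPlus
    \bp{-\varphi\bp{\theta_{\ZZ}\np{z},y}}}
    \eqfinp
\]
This is exactly the left-hand side of the decomposability identity~\eqref{hyp_item_3}, which I invoke to replace it by $\varphi\bp{\theta_{\XX}\np{x},y}$. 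What then remains is $\sup_{y \in \YY} \bp{\varphi\bp{\theta_{\XX}\np{x},y} \LowPlus \bp{-g\np{y}}}$, which by Definition~\ref{def_FM_conjugate} is precisely $\SFM{g}{\varphi}\bp{\theta_{\XX}\np{x}}$, yielding~\eqref{eq_NF_rect}.

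The skeleton of the argument is thus routine; the step I would scrutinize most is the extended-real arithmetic. The proof hinges on the upper addition in the definition~\eqref{hyp_item_2_bis} of $G$ turning, under negation, into a lower addition via~\eqref{eq:lower_upper_addition_minus}, so that every combination appearing inside the suprema is a $\LowPlus$ and is therefore compatible both with the conjugate formula~\eqref{eq:upper-Fenchel-Moreau_conjugate} and with the distributivity~\eqref{eq:lower_addition_sup}. I would also make sure that the convention fixing which of the two coupled spaces is used as the summation variable is the same for $\SFM{G}{\Phi}$ and for $\SFM{g}{\varphi}$ --- the conjugates being taken, respectively, over $\np{y,z}$ and over $y$. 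These $\pm\infty$ bookkeeping points are precisely where a naive computation could fail, and it is Moreau's lower/upper conventions that make the identity hold with no integrability or finiteness restriction on $g$ or $\varphi$.
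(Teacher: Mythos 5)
Your proof is correct and follows essentially the same route as the paper's: unfold $\SFM{G}{\Phi}$ via Definition~\ref{def_FM_conjugate}, substitute~\eqref{hyp_item_2} and~\eqref{hyp_item_2_bis} with the sign rule~\eqref{eq:lower_upper_addition_minus} converting the $\dotplus$ in $G$ into a $\plusdot$, pull out the $z$-independent term using~\eqref{eq:lower_addition_sup}, apply decomposability~\eqref{hyp_item_3} to the inner supremum, and recognize $\SFM{g}{\varphi}\bp{\theta_{\XX}\np{x}}$. Your added attention to the Moreau $\pm\infty$ conventions is exactly the bookkeeping the paper's computation relies on.
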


\begin{proof}
    We refer the reader to Appendix~\ref{Appendix} for the details of the proof.
\end{proof}

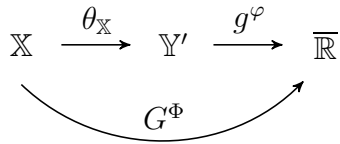
\begin{figure}[ht]
    \centering
    \begin{tikzpicture}[->,>=stealth',shorten >=1pt,auto,node distance=2cm, semithick]
        \tikzstyle{every state}=[draw=none,text=black]

        \node[state]    (A)                                 {$\XX$};
        \node[state]    (B)  [right of=A]                   {$\YY'$};
        \node[state]    (C)  [right of=B]     {$\overline{\RR}$};

        \path   
        (A)     edge  [above]   node {$\theta_{\XX}$}  (B)
        (B)     edge  [above]   node {$g^{\varphi}$}  (C)
        (A.270)     edge  [bend right=45]   node {$G^{\Phi}$}  (C.240);

    \end{tikzpicture}
    \caption{Representation of the Nested Formula~\eqref{eq_NF_rect}}
\end{figure}

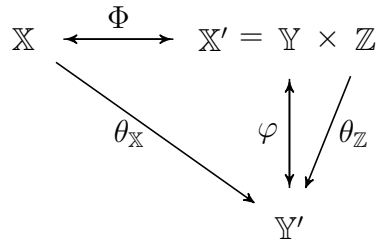
\begin{figure}[ht]
    \centering
    \begin{tikzpicture}[->,>=stealth',shorten >=1pt,auto,node distance=0.5cm, semithick]
        \tikzstyle{every state}=[draw=none,text=black]

        \node[state]    (A)                                 {$\XX$};
        \node[state]    (B)  [right of=A, xshift=2.0cm]     {$\XX'$};

        \node[state]    (C)  [right of=B]                   {$=$};
        \node[state]    (D)  [right of=C]                   {$\YY$};
        \node[state]    (E)  [right of=D]                   {$\times$};
        \node[state]    (F)  [right of=E]                   {$\ZZ$};
        \node[state]    (G)  [below of=D, yshift=-2.0cm]     {$\YY'$};

        \path   
                (A)     edge  [above]   node {$\Phi$}  (B)
                (B)     edge  [above]   node {}  (A)
                (D)     edge  [left]  node {$\varphi$}  (G)
                (G)     edge  [right]  node {}  (D)
                (A)     edge  [left]  node {$\theta_{\XX}$}   (G)
                (F)     edge  [right]  node {$\theta_{\ZZ}$}   (G);

    \end{tikzpicture}
    \caption{Representation of links between the mappings of Definition~\ref{def_rectangle_abstract}}
\end{figure}

\section{Conclusion}
\label{section_conclusion_TC_NF}

Time Consistency is a notion discussed in 
economics (dynamic optimization, bargaining) and mathematics (dynamical risk
measures, multi-stage stochastic optimization). 
We have gone through a selection of papers that are representative of the different 
fields; we have tried to separate the common core elements related to Time Consistency 
from the additional assumptions that make the specific contribution of each
author.
We have presented a framework of Weak Time Consistency which allows us to prove an equivalence
with a Nested Formula, under minimal assumptions. 
By formulating the core skeleton axioms, we hope to have shed light on the
notion of Time Consistency, often melted with other notions in the literature.
We believe that this makes the notion more transparent and we showed that 
it opens the way for possible extensions.
Indeed, we have established in Proposition~\ref{prop_sum_acceptance_set} 
a nice relation between acceptance sets of 
the original mapping $\rho$ and the ``conditional'' mapping $F$. 
In Proposition~\ref{prop_rect_to_nested_generic}, 
we have put to light an intriguing relation that certainly needs further 
investigation.

\bigskip
\textbf{Acknowledgements.}
The authors want to thank Université Paris-Est and Labex Bézout for the financial support. 
The first author particularly thanks them for the funding of his PhD program.

\appendix
\section{Appendix}
\label{Appendix}

We provide here the proofs of two Propositions 
of~\sect{Two_classes_of_time_consistent_mappings}.

\subsection{Proof of Proposition~\ref{prop_sum_acceptance_set}}

\begin{proof}
    The proof goes in three steps as follows:
    \begin{enumerate}
        \item\label{item1_proof_FandS} first, we show that 
        $\tailelement \in \cA_{\factor} \oplus \cA_{\mesu_{\mid \secondvalueset}} 
        \Leftrightarrow \factor\np{\tailelement} \in \cA_{\mesu_{\mid \secondvalueset}}
        \eqsepv \forall \tailelement \in \TailSet$,
        \item then we use the previous assertion to prove the two following statements:
        \begin{subequations}\label{eq_proof_FandS}
        \begin{align}
            \label{eq_proof_acceptset_1}
            \cA_{\mesu} \subset \cA_{\factor} \oplus \cA_{\mesu_{\mid \secondvalueset}}
            \Leftrightarrow
            \mesu\bp{\factor\np{\tailelement}}
            \leq
            \mesu\np{\tailelement}
            \eqsepv
            \forall \tailelement \in \TailSet
            \eqfinv
            \\
            \label{eq_proof_acceptset_2}
            \cA_{\mesu} \supset \cA_{\factor} \oplus \cA_{\mesu_{\mid \secondvalueset}}
            \Leftrightarrow
            \mesu\bp{\factor\np{\tailelement}}
            \geq
            \mesu\np{\tailelement}
            \eqsepv
            \forall \tailelement \in \TailSet
            \eqfinv
        \end{align}
        \end{subequations}
        \item finally, we bring all elements together to conclude.
    \end{enumerate}
    
    We now detail each step.
    \begin{enumerate}
       \item We prove the implication 
       \(
        \tailelement \in \cA_{\factor} \oplus \cA_{\mesu_{\mid \secondvalueset}} 
        \Rightarrow 
        \factor\np{\tailelement} \in \cA_{\mesu_{\mid \secondvalueset}}
        \)
        and the reverse statement 
        \(
        \factor\np{\tailelement} \in \cA_{\mesu_{\mid \secondvalueset}}
        \Rightarrow
        \tailelement \in \cA_{\factor} \oplus \cA_{\mesu_{\mid \secondvalueset}}
        \)
        successively.
       \begin{itemize}
       \item Let $\tailelement \in \cA_{\factor} \oplus \cA_{\mesu_{\mid \secondvalueset}}$ be given. 
       By definition, $\tailelement$ can be decomposed as
            $\tailelement = \tailelement_{\factor} \oplus \tailelement_{\mesu}$ with
            $\tailelement_{\factor} \in \cA_{\factor}$ and $\tailelement_{\mesu} \in \cA_{\mesu_{\mid \secondvalueset}}$.
            We successively obtain 
            \begin{align}
              \factor\np{\tailelement} 
              &= 
              \factor\np{\tailelement_{\factor}} \oplus \tailelement_{\mesu}
              \eqfinv
              \tag{as $\tailelement_{\mesu} \in \secondvalueset$ and $\factor$ is $\np{\TailSet,\secondvalueset}$-translation invariant}
              \\
              &\leq
              \tailelement_{\mesu}
              \eqfinv
              \tag{as $\tailelement_{\factor} \in \cA_{\factor} = \na{\tailelement \in \TailSet \mid \factor\np{\tailelement} \leq 0}$}
            \end{align}
            which leads to
            \begin{align}
                \mesu\bp{\factor\np{\tailelement}} 
                &\leq \mesu\np{\tailelement_{\mesu}}
                \eqfinv
                \tag{by monotonicity of $\mesu$}
                \\
                &\leq 0
                \eqfinv
                \tag{by definition of $\tailelement_{\mesu} \in \cA_{\mesu_{\mid \secondvalueset}}$}
            \end{align}
            and hence, $\factor\np{\tailelement} \in \cA_{\mesu_{\mid \secondvalueset}}$.
            \item We now assume that \( \factor\np{\tailelement} \in \cA_{\mesu_{\mid \secondvalueset}} \)
            and recall that for all $\tailelement \in \TailSet$, $\factor\bp{\tailelement \ominus \factor\np{\tailelement}} 
            = \factor\np{\tailelement} \ominus \factor\np{\tailelement} = 0$ by $\np{\TailSet-\secondvalueset}$-translation invariance
            of the mapping $\factor$.
            The converse implication follows immediately from the decomposition
            $\tailelement = \tailelement \ominus \factor\np{\tailelement} \oplus \factor\np{\tailelement}$ 
            since $\factor\np{\tailelement} \in \cA_{\mesu_{\mid \secondvalueset}}$
            by assumption and $\tailelement \ominus \factor\np{\tailelement} \in \cA_{\factor}$.
       \end{itemize}

       \item We prove statements~\eqref{eq_proof_acceptset_1} and~\eqref{eq_proof_acceptset_2} successively.
        \begin{itemize}
            \item First, we focus on equation~\eqref{eq_proof_acceptset_1}:
            \begin{align}
                \cA_{\mesu} \subset \cA_{\factor} \oplus \cA_{\mesu_{\mid \secondvalueset}}
                \Leftrightarrow
                \mesu\bp{\factor\np{\tailelement}}
                \leq
                \mesu\np{\tailelement}
                \eqsepv
                \forall \tailelement \in \TailSet
                \eqfinv
            \end{align}
            
            We suppose that left hand side of this equation is satisfied, i.e.
            $\cA_{\mesu} \subset \cA_{\factor} \oplus \cA_{\mesu_{\mid \secondvalueset}}$,
            and we show that it implies the right hand side of the equation.
            For that purpose, we fix $\tailelement \in \TailSet$.
            We recall that $\mesu\np{\tailelement} \in \valueset \subset \secondvalueset$ 
            by definition of the mapping $\mesu : \TailSet \to \valueset$ and assumption $\np{\valueset, \oplus} \subset \np{\secondvalueset, \oplus}$
            We have that $\factor\np{\tailelement} \ominus \mesu\np{\tailelement} = \factor\bp{\tailelement \ominus \mesu\np{\tailelement}}$ 
            by $\np{\TailSet, \secondvalueset}$-translation invariance of the mapping$\factor$.
            As $\tailelement \ominus \mesu\np{\tailelement} \in \cA_{\mesu} \subset \cA_{_{\factor}} \oplus \cA_{\mesu_{\mid \secondvalueset}}$ 
            we get by item~\ref{item1_proof_FandS} just above 
            that $\factor\bp{\tailelement \ominus \mesu\np{\tailelement}} \in \cA_{\mesu_{\mid \secondvalueset}}$ and then 
            $\factor\np{\tailelement} \ominus \mesu\np{\tailelement} \in \cA_{\mesu_{\mid \secondvalueset}}$.
            This implies that
            \begin{equation}
                \mesu\bp{\factor\np{\tailelement}} \ominus \mesu\np{\tailelement}
                =
                \mesu\bp{\factor\np{\tailelement} \ominus \mesu\np{\tailelement}}
                \leq
                0
                \eqfinp
              \end{equation}

            Assume now that $\mesu\bp{\factor\np{\tailelement}} \leq \mesu\np{\tailelement}$ 
            for all $\tailelement \in \TailSet$
            and let $\tilde{\tailelement} \in \cA_{\mesu}$.
            Then by definition~\eqref{eq_def_acceptance_set} 
            of an acceptance set, we got that $\mesu\np{\tilde{\tailelement}} \leq 0$. 
            It follows that $\mesu\bp{\factor\np{\tilde{\tailelement}}} \leq 0$
            and that $\factor\np{\tilde{\tailelement}} \in \cA_{\mesu_{\mid \secondvalueset}}$
            and so, by item~\ref{item1_proof_FandS} just above,
            that $\tilde{\tailelement} \in \cA_{\factor} \oplus \cA_{\mesu_{\mid \secondvalueset}}$.

            \item Second, we focus on Equation~\eqref{eq_proof_acceptset_2}
            \begin{align}
                \cA_{\mesu} \supset \cA_{\factor} \oplus \cA_{\mesu_{\mid \secondvalueset}}
                \Leftrightarrow
                \mesu\bp{\factor\np{\tailelement}}
                \geq
                \mesu\np{\tailelement}
                \eqsepv
                \forall \tailelement \in \TailSet
                \eqfinv
            \end{align}
            We assume $\cA_{\mesu} \supset \cA_{\factor} \oplus \cA_{\mesu_{\mid \secondvalueset}}$. 
            Let us fix $\tailelement \in \TailSet$.
            Then, by adding and removing the term $\factor\np{\tailelement}$ we get
            \begin{align}
                \tailelement \ominus \mesu\bp{\factor\np{\tailelement}}
                &=
                \underbrace{\tailelement \ominus \factor\np{\tailelement}}_{\in \cA_{\factor}}
                \oplus 
                \underbrace{\factor\np{\tailelement} \ominus \mesu\bp{\factor\np{\tailelement}}}_{\in \cA_{\mesu \mid \factor}}
                \in \cA_{\factor} \oplus \cA_{\mesu_{\mid \secondvalueset}}
                \eqfinp
            \end{align}
            It follows by left hand side of~\eqref{eq_proof_acceptset_2} 
            that $\tailelement \ominus \mesu\bp{\factor\np{\tailelement}}$ belongs to $\cA_{\mesu}$.
            That implies, taken together with the 
            $\np{\TailSet,\valueset}$-translation invariance of 
            the mapping $\mesu: \TailSet \to \valueset$
            \begin{align}
                \mesu\np{\tailelement} \ominus \mesu\bp{\factor\np{\tailelement}}
                =
                \mesu\bp{\tailelement \ominus \mesu\bp{\factor\np{\tailelement}}}
                \leq 0
                \eqfinp
            \end{align}

            To prove the reverse implication of Equation~\eqref{eq_proof_acceptset_2}, 
            take $\tailelement \in \cA_{\factor} \oplus \cA_{\mesu_{\mid \secondvalueset}}$ and assume that 
            $\mesu\bp{\factor\np{\valueelement}}  \geq\mesu\np{\valueelement}$.
            Using step~\ref{item1_proof_FandS}, we have that $\factor\np{\tailelement} \in \cA_{\mesu_{\mid \secondvalueset}}$ and we obtain that 
            \begin{equation}
                \mesu\np{\tailelement} \leq \mesu\bp{\factor\np{\tailelement}} \leq 0
                \eqfinv
            \end{equation}
            which gives $\tailelement \in \cA_{\mesu}$
            by definition~\eqref{eq_def_acceptance_set} of an acceptance set.
        \end{itemize}
        
        \item We finally bring all elements together. We know from Theorem~\ref{decomposition_time_consistent}
        that the couple of mappings $\np{\aggreg_{\mesu},\factor}$ is usual time consistent
        if and only if the subaggregator $\composite^{\aggreg_{\mesu}, \factor}$ 
        defined in~\eqref{definition_subaggregator}
        is a mapping increasing in its second argument and
        we have the nested formula
        $\aggreg_{\mesu}\np{\headelement, \tailelement} = \composite^{\aggreg_{\mesu}, \factor}\bp{\headelement, \factor\np{\tailelement}}$.
       
    In this case, by Definition~\ref{definition_subaggregator}, we have that
    \begin{align}
        \composite^{\aggreg_{\mesu}, \factor}\np{\headelement, \secondvalueelement}
        =
        \ba{\aggreg_{\mesu}\np{\headelement, \tailelement} \mid \factor\np{\tailelement} = \secondvalueelement}
        \eqsepv
        \forall \np{\headelement, \secondvalueelement} \in \HeadSet \times \secondvalueset
        \eqfinp
    \end{align}
    As the set-valued mapping $\composite^{\aggreg_{\mesu}, \factor}$ is a mapping, 
    choosing one element $\tailelement \in \TailSet$
    such that $\factor\np{\tailelement} = \secondvalueelement$ is sufficient to define the value
    of $\composite^{\aggreg_{\mesu}, \factor}\np{\headelement, \secondvalueelement}$.
    We notice that, for each element $\secondvalueelement \in \secondvalueset$, 
    the following statement holds true
    \begin{equation}
        \factor\np{\tailelement}
        =
        \factor\np{0} \oplus \tailelement
        \eqfinp
    \end{equation}
    By $\np{\TailSet,\secondvalueset}$-translation invariance property~\eqref{eq_IT_ominus},
    we have that $\factor\bp{\secondvalueelement \ominus \factor\np{0}} 
    = \factor\np{0} \oplus \bp{\secondvalueelement \ominus \factor\np{0}}
    =\secondvalueelement$ for all $\secondvalueelement \in \secondvalueset$.
    We deduce that
    \begin{align}\label{eq_sub_IT}
        \composite^{\aggreg_{\mesu}, \factor}\np{\headelement, \secondvalueelement}
        =
        \aggreg_{\mesu}\bp{\headelement, \secondvalueelement \ominus \factor\np{0}}
        \eqsepv
        \forall \np{\headelement, \secondvalueelement} \in \HeadSet \times \secondvalueset
        \eqfinp
    \end{align}
    Hence, the nested formula $\aggreg_{\mesu}\np{\headelement, \tailelement} 
    = 
    \composite^{\aggreg_{\mesu}, \factor} \bp{\headelement, \factor\np{\tailelement}}$
    reads
    \begin{align}
        \aggreg_{\mesu}\np{\headelement, \tailelement} 
        &= 
        \aggreg_{\mesu}\bp{\headelement, \factor\np{\tailelement} \ominus \factor\np{0}}
        \eqfinv
        \tag{by~\eqref{eq_sub_IT}}
        \\
        \mesu\np{\headelement \oplus \tailelement} 
        &= 
        \mesu\bp{\headelement \oplus \factor\np{\tailelement} \ominus \factor\np{0}}
        \eqfinv
        \tag{by~\eqref{eq_aggreg_mesu} that defines $\aggreg_{\mesu}$}
        \\
        \headelement \oplus \mesu\np{\tailelement} 
        &= 
        \headelement \oplus \mesu\bp{\factor\np{\tailelement} \ominus \factor\np{0}}
        \tag{by $\np{\TailSet, \valueset}$-translation invariance}
        \eqfinv
        \\
        \mesu\np{\tailelement} 
        &= 
        \mesu\bp{\factor\np{\tailelement} \ominus \factor\np{0}}
        \tag{by compatibility of $\oplus$ with $\leq$}
        \eqfinv
        \\
        \mesu\np{\tailelement} 
        &= 
        \mesu\bp{\factor\np{\tailelement}}
        \tag{as $\factor\np{0} = 0$.}
    \end{align}        
    The fact that $\mesu\np{\tailelement} = \mesu\bp{\factor\np{\tailelement}}$ 
    taken together with both statements 
    of Equations~\eqref{eq_proof_FandS} gives the wanted result.
        
        This ends the proof.
    \end{enumerate}
\end{proof}

\subsection{Proof of Proposition~\ref{prop_rect_to_nested_generic}}

\begin{proof}
    We have, for any $x \in \XX$, the following equalities 
    \begin{align}
        G^{\Phi}\np{x}
        &=
        \sup_{\np{y, z} \in \YY \times \ZZ}
        \Ba{
            \Phi\bp{x, \np{y, z}}
            \plusdot
            \bp{-G\np{y,z}}
        }
        \eqfinv
        \\
        \intertext{by Equation~\eqref{eq:upper-Fenchel-Moreau_conjugate} that expresses the $\Phi$-conjugate of $G$,}
        &= 
        \sup_{\np{y, z} \in \YY \times \ZZ}
        \bga{
            \varphi\bp{\theta_{\XX \times \ZZ}\np{x,z},y}
            \plusdot
            \bp{-g\np{y}} 
            \plusdot 
            \Bp{-\varphi\bp{\theta_{\ZZ}\np{z},y}}
        }
        \eqfinv
        \\
        \intertext{by Equations~\eqref{hyp_item_2} and~\eqref{hyp_item_2_bis} that express particular forms of $\Phi$ and $G$,
            and by the joint property~\eqref{eq:lower_upper_addition_minus} of Moreau's additions,}
        &= 
        \sup_{y\in \YY}
        \bga{
            -g\np{y} 
            \plusdot
            \sup_{z \in \ZZ}
            \Ba{
                \varphi\bp{\theta_{\XX \times \ZZ}\np{x,z},y}
                \plusdot
                \bp{-\varphi\bp{\theta_{\ZZ}\np{z},y}}
            }
        }
        \eqfinv
        \\
        \intertext{by property~\eqref{eq:lower_addition_sup} of Moreau's additions,}
        &= 
        \sup_{y\in \YY}
        \Ba{
            -g\np{y} 
            \plusdot
            \varphi\bp{\theta_{\XX}\np{x},y}
        }
        \eqfinv
        \\
        \intertext{by Equation~\eqref{hyp_item_3} that expresses the supremum,}
        &=
        g^{\varphi}\bp{\theta_{\XX}\np{x}}
        \eqfinv
    \end{align}
    by Definition~\ref{def_FM_conjugate} of a Fenchel-Moreau conjugate.
    This ends the proof.
\end{proof}

\newcommand{\noopsort}[1]{} \ifx\undefined\allcaps\def\allcaps#1{#1}\fi

\end{document}